\renewenvironment{proof}{{\it Proof. } }{{\hfill $\Box$}\vspace{.5pc}}
\newtheorem{specification}{Specification}
\newtheorem{lem}{Lemma}
\newcommand{\keywords}[1]{\par\addvspace\baselineskip
\noindent\keywordname\enspace\ignorespaces#1}
\newcommand{\eg}{\emph{e.g., }}
\newcommand{\ie}{\emph{i.e., }}
\begin{document}
\frontmatter          
\pagestyle{headings}  
\addtocmark{Deaf, Dumb, and Chatting Asynchronous Robots: 
        Enabling Distributed Computation and Fault-Tolerance Among Stigmergic Robot} 

\mainmatter              
\title{Snap-Stabilizing Linear Message Forwarding%
        \thanks{This work is supported by ANR SPADES grant.}
}

\author{
  Alain Cournier\inst{1} \and  
  Swan Dubois\inst{2} \and 
  Anissa Lamani\inst{1} \and 
  Franck Petit\inst{2} \and 
  Vincent Villain\inst{1} 
}

\authorrunning{Cournier {\em et al}.}   
%
\tocauthor{ Alain Cournier, Swan Dubois, Anissa Lamani, Franck Petit, Vincent Villain}
\institute{
MIS, Universit\'e of Picardie Jules Verne, France 
\and
LiP6/CNRS/INRIA-REGAL, Universit\'e Pierre et Marie Curie - Paris 6, France 
}

\maketitle
\begin {abstract}

In this paper, we present the first snap-stabilizing message forwarding protocol that uses 
a number of buffers per node being independent of any global parameter, that is $4$ buffers per 
link.  The protocol works on a linear chain of nodes, that is possibly an overlay on a large-scale and dynamic system,
\eg Peer-to-Peer systems, Grids\ldots  
Provided that the topology remains a linear chain and that nodes join and leave 
``neatly'', the protocol tolerates topology changes.  
We expect that this protocol will be the base to get similar results on more general topologies. 

\keywords {Dynamicity, Message Forwarding, Peer-to-Peer, Scalability, Snap-stabilization} 
\end {abstract}

\section{Introduction}

These last few years have seen the development of large-scale distributed systems.  
Peer-to-peer (P2P) architectures belong to this category.  They usually offer computational services or storage facilities.
Two of the most challenging issues in the development of such large-scale distributed 
systems are to come up with scalability and dynamicity.  {\em Scalability} is achieved by designing protocols 
with performances growing sub-linearly with the number of nodes (or, processors, participants).  {\em Dynamicity} refers to 
distributed systems in which topological changes can occur, \ie nodes may join or leave the system.

{\em Self-stabilization}~\cite{D00} is a general technique to design distributed systems that 
can tolerate arbitrary transient faults.  Self-stabilization is also well-known to be suitable for dynamic systems. 
This is particularly relevant whenever the distributed (self-stabilizing) protocol does not require any global parameters,
like the number of nodes ($n$) or the diameter ($D$) of the network.  
With such a self-stabilizing protocol, it is not required to change 
global parameters in the program ($n$, $D$, etc) when nodes join or leave the system.  Note that this property 
is also very desirable to achieve scalability. 

The {\em end-to-end communication} problem consists in delivery in finite time across the network of a sequence of 
data items generated at a node called the sender, to a designated node called the receiver.
This problem is generally split into the two following problems:
($i$) the {\em routing} problem, \ie the determination of the path followed by the messages to reach their destinations;
($ii$) the {\em message forwarding} problem that consists in the management of network resources in order to forward messages. 
The former problem is strongly related to the problem of spanning tree construction. 
Numerous self-stabilizing solutions exist for this problem, \eg \cite{HC92,KK05,JT03}. 

In this paper, we concentrate on the latter problem, \ie the message forwarding problem.   
More precisely, it consists in the design of a protocol managing the mechanism allowing
the message to move from a node to another on the path from the sender $A$ to the receiver $B$.
To enable such a mechanism, each node on the path from $A$ to $B$ has a reserved memory space called buffers.
With a finite number of buffers, the message forwarding problem consists in avoiding deadlocks and livelocks (even
assuming correct routing table).
Self-stabilizing solutions for the message forwarding problem are proposed in \cite{APV96,KOR95}. 
Our goal is to provide a snap-stabilizing solution for this problem.
A {\em snap-stabilizing protocol}~\cite{Bui07}
guarantees that, starting from any configuration, it always behaves
according to its specification, \ie it is a
self-stabilizing algorithm which is optimal in terms of stabilization
time since it stabilizes in $0$ steps.
Considering the message-forwarding problem, combined with a self-stabilizing routing protocol, 
snap-stabilization brings the desirable property that 
every message sent by the sender is delivered in finite time to the receiver.
By contrast, any self-stabilizing (but not snap-stabilizing) solution for this problem
ensures the same property, ``eventually''. 

The problem of minimizing the number of required buffers on each node is a crucial issue
for both dynamicity and scalability. 
The first snap-stabilizing solution for this problem can be found in~\cite{CDV09-1}.
Using $n$ buffers per node, this solution is not suitable for large-scale system. 
The number of buffers is reduced to $D$ in~\cite{CDV09-2}, which improves the scalability aspect.  However,
it works by reserving the entire sequence of buffers leading from the sender to the receiver.
Furthermore, to tolerate dynamicity, each time a topology change occurs in the system, 
both of them would have to rebuild required data structures, maybe on the cost of loosing 
the snap-stabilisation property.

In this paper, we present a snap-stabilizing message forwarding protocol that uses 
a number of buffers per node being independent of any global parameter, that is $4$ buffers per 
link. The protocol works on a linear 
chain of nodes, that is possibly an overlay on a large-scale and dynamic system \eg Peer-to-Peer systems, Grids\ldots  
Provided that ($i$) the topology remains a linear chain and ($ii$) that nodes join and leave 
``neatly'', the protocol tolerates topology changes.  By ``{\it neatly}'', we mean that when a node leaves the
system, it makes sure that the messages it has to send are transmitted, \ie all its buffers are free.    
We expect that this protocol will be the base to get similar results on more general topologies. 

The paper is structured as follow: In Section~\ref{sec:model}, we define our model and some useful terms that are used afterwards. 
In Section~\ref{sec:algo}, we first give an informal overview of our algorithm, followed by its formal description. 
In Section~\ref{sec:proof}, we prove the correctness of our algorithm.
Dynamicity is discussed in Section~\ref{sec:dynamicity}. We conclude the paper in Section~\ref{sec:conclu}.  

\section{Model and definitions\label{sec:model}}

\paragraph{\textbf{Network}.} 
We consider a network as an undirected connected graph $G=(V,E)$ where $V$ is the set of nodes (processors) and $E$ is the set of bidirectional communication links. A link $(p,q)$ exists if and only if the two processors $p$ and $q$ are neighbours. Note that, every processor is able to distinguish all its links. To simplify the presentation we refer to the link $(p,q)$ by the label $q$ in the code of $p$. In our case we consider that the network is a chain of $n$ processors.\\

\paragraph{\textbf{Computational model}.}
We consider in our work the classical local shared memory model introduced by Dijkstra~\cite{D74} known as the state model. In this model communications between neighbours are modelled by direct reading of variables instead of exchange of messages. The program of every processor consists in a set of shared variables (henceforth referred to as variable) and a finite number of actions. Each processor can write in its own variables and read its own variables and those of its neighbours. Each action is constituted as follow:

\begin{center} $<Label>::<Guard>$ $\rightarrow$ $<Statement>$ \end{center}

The guard of an action is a boolean expression involving the variables of $p$ and its neighbours. The statement is an action which updates one or more variables of $p$. Note that an action can be executed only if its guard is true. Each execution is decomposed into steps. 

The state of a processor is defined by the value of its variables. The state of a system is the product of the states of all processors. The local state refers to the state of a processor and the global state to the state of the system.  

Let $y$ $\in$ $C$ and $A$ an action of $p$ ($p$ $\in$ $V$). $A$ is {\em enabled} for $p$ in $y$ if and only if the guard of
$A$ is satisfied by $p$ in $y$. Processor $p$ is enabled in $y$ if and only if at least one action is enabled at $p$
in $y$. Let $P$ be a distributed protocol which is a collection of binary transition relations denoted by
$\rightarrow$, on $C$. An execution of a protocol $P$ is a maximal sequence of configurations  $e=
y_{0}y_{1}...y_{i}y_{i+1} \ldots$ such that, $\forall$ $i\ge0$, $y_{i} \rightarrow y_{i+1}$ (called a step) if $y_{i+1}$
exists, else $y_{i}$ is a terminal configuration. {\em Maximality} means that the sequence is either finite (and no action
of $P$ is enabled in the terminal configuration) or infinite. All executions considered here are assumed to be
maximal. $\xi$ is the set of all executions of $P$. 
Each step consists on two sequential phases atomically executed:
($i$) Every processor evaluates its guard;
($ii$) One or more enabled processors execute its enabled actions. 
When the two phases are done, the next step begins. 
This execution model is known as the \emph{distributed daemon}~\cite{BGM89}. 
We assume that the daemon is \emph{weakly fair}, meaning that if a processor $p$ is continuously $enabled$, 
then $p$ will be eventually chosen by the daemon to execute an action.

In this paper, we use a composition of protocols.  We assume that the above statement ($ii$) is applicable to every
protocol. In other words, each time an enabled processor $p$ is selected by the daemon, $p$ executes the enabled actions of every protocol.





\paragraph{\textbf{Snap-Stabilization}.} 
Let $\Gamma$ be a task, and $S_{\Gamma}$ a specification of $\Gamma$ . A protocol $P$ is snap-stabilizing for $S_{\Gamma}$ if and only if $\forall \Gamma \in \xi$, $\Gamma$ satisfies $S_{\Gamma}$.


\paragraph{\textbf{Message Forwarding Problem}.}


The message forwarding problem is specified as follows: 

\begin{specification}[$SP$]\label{spec:SP}
A protocol $P$ satisfies $SP$ if and only if the following two requirements are satisfied in every execution of $P$:
\begin{enumerate}
\item Any message can be generated in a finite time;
\item Any valid message is delivered to its destination once and only once in a finite time.
\end{enumerate}
\end{specification}

%

 \paragraph{\textbf{Buffer Graph}} 


In order to conceive our snap stabilizing algorithm we will use a structure called Buffer Graph introduced in \cite{MS78}. A Buffer Graph is defined as a directed graph where nodes are a subset of the buffers of the network and links are arcs connecting some pairs of buffers, indicating permitted message flow from one buffer to another one. Arcs are permitted only between buffers in the same node, or between buffers in distinct nodes which are connected by communication link.
 
Let us define our buffer graph (refer to Figure~\ref{BG}):

\begin{figure}
   \centering
   \includegraphics[width=9cm]{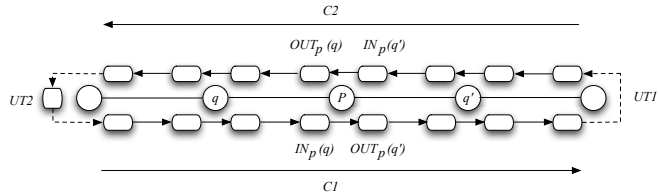}
   \caption{Buffer Graph}\label{BG}
\end{figure} 
 

\noindent
Each processor $p$ has four buffers, two for each link $(p,q)$ such as $q \in N_{p}$ (except for the processors that
are at the extremity of the chain that have only two buffers, since they have only one link). Each processor has two
input buffers denoted by $IN_{p}(q)$, $IN_{p}(q')$ and two output buffers denoted by $OUT_{p}(q)$, $OUT_{p}(q')$ such as $q,q'\in
N_{p}$ and $q\ne q'$ (one for each neighbour). The generation of a message is always done in the output buffer of the
link $(p,q)$ so that, according to the routing tables, $q$ is the next processor for the message in order to reach 
the destination.  Let us refer to $nb(m,b)$ as the next buffer of Message~$m$ stored in $b$,  
$b \in \{IN_{p}(q) \vee OUT_{p}(q)\}$, $q\in N_{p}$. We have the following properties:
\begin{enumerate}
\item $nb(m,IN_{p}(q))= OUT_{q}(p)$
\item $nb(m,OUT_{p}(q)= IN_{q}(p)$
\end{enumerate}  




\section{Message Forwarding\label{sec:algo}}

In this section, we first give the idea of our snap stabilizing message forwarding algorithm in the informal overview, 
then we give the formal description followed by the correctness proofs. 

\subsection{Overview of the algorithm}

In this section, we provide an informal description of our snap stabilizing message forwarding algorithm that 
tolerates the corruption of the routing tables in the initial configuration. 

To ease the reading of the section, we assume that there is no message in the system whose the destination is not
in the system.  This restriction is not a problem as we will see in Section~\ref{sec:dynamicity}.

We assume that there is a self-stabilizing algorithm, $Rtables$, that calculates the routing tables and runs simultaneously to our
algorithm. 
We assume that our algorithm has access to the routing tables
via the function $Next_{p}(d)$ which returns the identity of the neighbour to which $p$ must forward the message to
reach the destination $d$. To reach our purpose we define a buffer graph on the chain which consists of two chains,
one in each direction ($C1$ and $C2$ refer to Figure \ref{BG}). 

The overall idea of the algorithm is as follows: When a processor wants to generate a message, it consults the routing
tables to determine the next neighbour by which the message will transit in order to reach the destination. Note that
the generation is always done in the Output buffers. Once the message is on the chain, it follows the buffer chain
(according to the direction of the buffer graph) and if the messages can progress enough in the system (move) then it
will either meet its destination and hence it will be consumed in a finite time or it will reach the input buffer of
one of the processors that are at the extremity of the chain. In the latter case, if the processor that is at the
extremity of the chain is not the destination then, that means that the message was in the wrong direction. The idea
is to change the direction of the message by copying it in the output buffer of the same processor (directly (UT1) or
using the extra buffer (UT2), refer to Figure~\ref{BG}). 

Note that if the routing tables are stabilized and if all the messages are in the right direction then all the
messages can move on $C1$ or $C2$ only and no deadlock happens. However, in the opposite case (the routing tables are not stabilized or
some messages are in the wrong direction), deadlocks may happen if no control is introduced. For instance, suppose
that in the initial configuration all the buffers, uncluding the extra buffer of $UT2$, contain different messages such that 
no message can be consumed. It is clear that in this case no message can move and the system is deadlocked. Thus in order to solve this problem we
have to delete at least one message. However, since we want a snap stabilizing solution we cannot delete a message
that has been generated. Thus we have to introduce some control mechanisms in order to avoid this situation to appear
dynamically (after the first configuration). In our case we decided to use the PIF algorithm that comprises two main
phases: Broadcast (Flooding phase) and Feedback (acknowledgement phase) to control and avoid deadlock situations.

Before we explain how the PIF algorithm is used, let us focus on the message progression again.
A buffer is said to be {\em free} if and only if it is empty (it contains no message) or contains the same message as
the input buffer before it in the buffer graph buffer. For instance, if $IN_{p}(q)=OUT_{q}(p)$ then $OUT_{q}(p)$ is a
free buffer. In the opposite case, a buffer is said to {\em busy}.
The transmission of messages produces the filling and the cleaning of each buffer, \ie each buffer is alternatively
free and busy. This mechanism clearly induces that {\em free slots} move into the buffer graph, a free slot
corresponding to a free buffer at a given instant. The moving of free slots is shown in Figure \ref{EmptySlot}\footnote{Note that in the
algorithm, the actions $(b)$ and $(c)$ are executed in the same step (refer to the guarded action $R3$).}.
Notice that the free slots move in the opposite direction of the message progression.  This is the key feature on
which the PIF control is based.

\begin{figure}
 \begin{minipage}[b]{.46\linewidth}
  \begin{center}
  \epsfig{figure=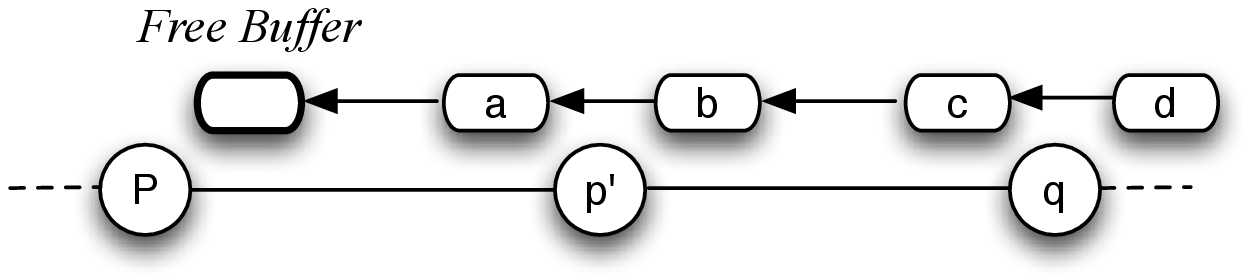,width=5.5cm}\\
  \textit{(a)} The input buffer of $p$ is free. Node $p$ can copy the message $a$. 
  \end{center}
 \end{minipage} \hfill
 \begin{minipage}[b]{.46\linewidth}
 \begin{center}
 \epsfig{figure=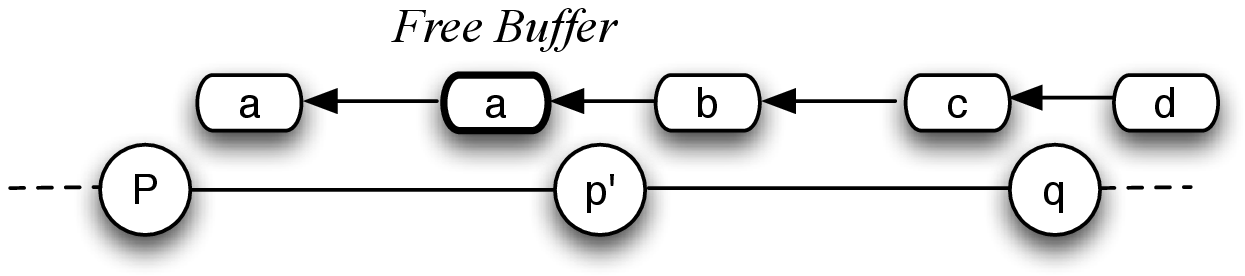,width=5.5cm}\\
  \textit{(b)} The output buffer of $p'$ is free. Node $p'$ can copy the message $b$.
  \end{center}
 \end{minipage}\hfill
 \begin{minipage}[b]{.46\linewidth}
\begin{center}  
  \epsfig{figure=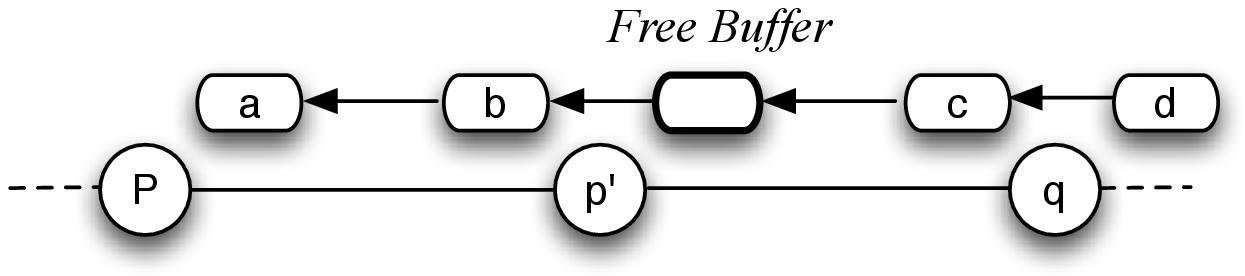,width=5.5cm}\\
  \textit{(c)} The input buffer of $p'$ is free. Node $p'$ can copy the message $c$.
  \end{center}
 \end{minipage}\hfill
 \begin{minipage}[b]{.46\linewidth}
 \begin{center}
 \epsfig{figure=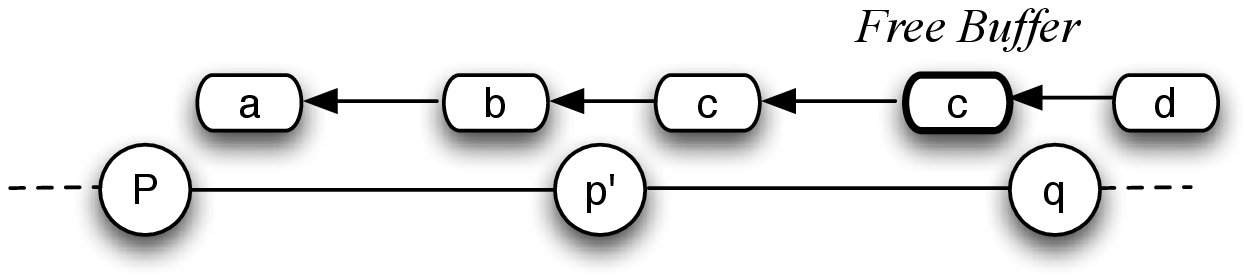,width=5.5cm}\\
  \textit{(d)} The output buffer of $q$ is free. Node $q$ can copy the message $d$.
  \end{center}
 \end{minipage}
 \caption{An example showing the free slot moving.\label{EmptySlot}}
\end{figure}

When there is a message that is in the wrong direction in the Input buffer of the processor $p_{0}$, $p_{0}$ copies
this message in its extra buffer releasing its Input buffer and it initiates a PIF wave at the same time. The aim of
the PIF waves is to escort the free slot that is in the input buffer of $p_{0}$ in order to bring it in the Output
buffer of $p_{0}$. Hence the message in the extra buffer can be copied in the output buffer to become in the right
direction. Once the PIF wave is initiated no message can be generated on this free slot, at each time the Broadcast
progresses on the chain the free slot moves as well following the PIF wave (the free slot moves by transmitting
messages on $C1$ (refer to Figure \ref{BG}). In the worst case, the free slot is the only one, hence by moving the
output buffer of the other extremity of the chain $p$ becomes free. Depending on the destination of the message that
is in the input buffer of $p$, either this message is consumed or copied in the Output buffer of $p$. In both cases
the input buffer of $p$ contains a free slot.

 In the same manner during the feedback phase, the free slot that is in the input buffer of the extremity $p$ will
progress at the same time as the feedback of the PIF wave. Note that this time the free slot moves on $C2$ (see Figure
\ref{BG}). Hence at the end of the PIF wave the output buffer that comes just after the extra buffer contains a free
slot. Thus the message that is in the extra buffer can be copied in this buffer and deleted from the extra buffer.
Note that since the aim of the PIF wave is to bring the free slot in the output buffer of $p_{0}$ then when the PIF
wave meets a processor that has a free buffer on $C2$ the PIF wave stops escorting the previous free slot and starts
the feedback phase with this second free slot (it escorts the new free slot on $C2$). Thus it is not necessary to
reach the other extremity of the chain.

Now, in the case where there is a message in the extra buffer of $p_{0}$ such as no PIF wave is executed then we are
sure that this message is an invalid message and can be deleted. In the same manner if there is a PIF wave that is
executed such that at the end of the PIF wave the Output buffer of $p_{0}$ is not free then like in the previous case
we are sure that the message that is in the extra buffer is invalid and thus can be deleted. Thus when all the buffers are full such as all the messages are different and cannot be consumed, then the extra buffer of $p_{0}$ will be released. 

Note that in the description of our algorithm, we assumed the presence of a special processor $p_{0}$. This processor has an Extra buffer used to change the direction of messages that are in the input buffer of $p_{0}$ however their destination is different from $p_{0}$. In addition it has the ability to initiate a PIF wave. Note also that the other processors of the chain do not know where this special processor is. A symmetric solution can also be used (the two processors that are at the extremity of the chain execute the same algorithm) and hence both have an extra buffer and can initiate a PIF wave. The two PIF wave initiated at each extremity of the chain use different variable and are totally independent.

\subsection{Formal description of the algorithm}

We first define in this section the different data and variables that are used in our algorithm. Next, we present the
PIF algorithm and give a formal description of the linear snap stabilizing message forwarding algorithm.

Character {\tt '?'} in the predicates and the algorithms means {\em any value}.

\begin{itemize}
  \item \textbf{Data}
           \begin{itemize}
            \item $n$ is a natural integer equal to the number of processors of the chain. 
            \item $I=\{0,...,n-1\}$ is the set of processors' identities of the chain. 
            \item $N_{p}$ is the set of identities of the neighbours of the processor p.   
           \end{itemize}
  \item \textbf{Message}
            \begin{itemize}
             \item $(m,d,c)$: $m$ contains the message by itself, \ie the data carried from the sender to the recipient,
                   $ d \in I$ is the identity of the message recipient, and
                   $c$ is a color number given to the message to avoid duplicated deliveries.
             \end{itemize}

  \item \textbf{Variable}
             \begin{itemize}
              \item \textit{In the forwarding algorithm}
                        \begin{itemize}
                         \item $IN_{p}(q)$: The input buffer of $p$ associated to the link $(p,q)$.
                         \item $OUT_{p}(q)$: The output buffer of $p$ associated to the link $(p,q)$.
                         \item $EXT_{p}$: The Extra buffer of processor $p$ which is at the extremity of the chain.
                         \end{itemize}
             \item{\textit{ In the PIF algorithm}}
                          \begin{itemize}
                            \item $S_{p}= (B \vee F \vee C,q)$ refers to the state of processor $p$, $q$ is a pointer
to a neighbour of $p$.\\
                           \end{itemize}
            \end{itemize}
            
   \item \textbf{Input/Output}
           \begin{itemize}
            \item $Request_{p}$: Boolean, allows the communication with the higher layer, it is set at true by the application and false by the forwarding protocol.
             \item $PIF\mbox{-}Request_{p}$: Boolean, allows the communication between the PIF and the forwarding algorithm, it is set at true by the forwarding algorithm and false by the PIF algorithm.
             \item{The variables of the PIF algorithm are the input of the forwarding algorithm.}
            \end{itemize}

   \item \textbf{Procedure}
            \begin{itemize}
             \item $Next_{p}(d)$: refers to the neighbour of $p$ given by the routing table for the destination $d$.
             \item $Deliver_{p}(m)$: delivers the message $m$ to the higher layer of $p$. 
             \item $Choice(c)$: chooses a color for the message $m$ which is different from the color of the message 
                   that are in the buffers connected to the one that will contain $m$.
             \end{itemize}
             
    \item{\textbf{Predicate}}
             \begin{itemize}
              \item{\textit{$Consumption_{p}(q,m)$}: $IN_{p}(q)=(m,d,c)$ $\wedge$ $d=p$ $\wedge$ $OUT_{q}(p) \ne (m,d,c)$}
               \item{$leaf_{p}(q)$: $S_{q}=(B,?)$ $\wedge$ ($\forall$ $q' \in N_{p} /\{q\}$, $S_{q'}\ne(B,p)$ $\wedge$ ($consumption_{p}(q)$ $\vee$ $OUT_{p}(q')=\epsilon$ $\vee$ $OUT_{p}(q')=IN_{q'}(p)$)).}
              \item{\textit{$NO\mbox{-}PIF_{p}$}: $S_{p}=(C,NULL)$ $\wedge$ $\forall q \in N_{p}$,  $S_{q} \ne (B,?)$.}
              \item{\textit{init-PIF}: $S_{p}=(C,NULL)$ $\wedge$ ($\forall q \in N_{p}$, $S_{q}=(C,NULL)$) $\wedge$
\newline \noindent $ PIF\mbox{-}Request_{p}=true$.}
               \item{\textit{$Inter\mbox{-}trans_{p}(q)$}: $IN_{p}(q)=(m,d,c)$  $\wedge$ $d\ne p$ $\wedge$ $OUT_{q}(p) \ne IN_{p}(q)$ $\wedge$ ($\exists q' \in N_{p}/\{q\}$, $OUT_{p}(q')=\epsilon$ $\vee$ $OUT_{p}(q')=IN_{q'}(p)$).}
              \item{$internal_{p}(q)$: $p \ne p_0$ $\wedge$ $\neg$ $leaf_{p}(q)$.} 
              \item{$Road\mbox{-}Change_{p}(m)$: $p=p_0$ $\wedge$ $IN_{p}(q)=(m,d,c)$ $\wedge$ $d \ne p$ $\wedge$ $EXT_{p}=\epsilon$ $\wedge$ $OUT_{q}(p)\ne IN_{p}(q)$.}
               \item{$\forall$ $TAction \in {C,B}$, we define $TAction\mbox{-}initiator_{p}$ the predicate: $p=p_0$ $\wedge$ (the garde of TAction in $p$ is enabled).}
               \item{$\forall$ $Tproc \in \{internal,leaf\}$ and $TAction \in \{B,F\}$, $T\mbox{-}Action\mbox{-}Tproc_{p}(q)$ is defined by the predicate: $Tproc_{p}(q)$ is true $\wedge$ TAction of $p$ is enabled.}
               \item{$PIF\mbox{-}Synchro_{p}(q)$:  ($B_{q}\mbox{-}internal_{p}$ $\vee$ $F_{q}\mbox{-}leaf_{p}$ $\vee$ $F_{q}\mbox{-}internal_{p}$) $\wedge$ $S_{q}=(B,?)$.\\}
               \end{itemize}
               
\item{We define a fair pointer that chooses the actions that will be performed on the output buffer of a processor $p$. (Generation of a message or an internal transmission). }

\end{itemize}

\begin{algorithm}[htb]
\caption{PIF \label{algo:PIF}}
\begin{scriptsize}

   \begin{itemize}
       \item{\textbf{For the initiator ($p_0$)}}
                \begin{itemize}
                \item{\textbf{B-Action::}} \textit{init-PIF }$\rightarrow$ $S_{p}:=(B,-1)$, $PIF\mbox{-}Request_{p}:=false$.
                 \item{\textbf{C-Action::}} $S_{p}=(B,-1)$ $\wedge$  $\forall q \in N_{p}$, $S_{q}=(F,?)$ $\rightarrow$ $S_{p}:=(C,NULL)$.\\
                 \end{itemize}
        \item{\textbf{For the leaf processors: $leaf_{p}(q)=true$ $\vee$ $|N_{p}|=1$}} 
                 \begin{itemize}
                  \item{\textbf{F-Action::}}  $S_{p}=(C,NULL)$  $\rightarrow$ $S_{p}:=(F,q)$.
                  \item{\textbf{C-Action::}} $S_{p}=(F,?)$ $\wedge$ $\forall q \in N_{p}$, $S_{q}=(F \vee C,?)$  $\rightarrow$ $S_{p}:=(C,NULL)$. \\
                   \end{itemize}
          \item {\textbf{For the processors}}
                   \begin{itemize}
                    \item{\textbf{B-Action::}}  $\exists ! q \in N_{p}$, $S_{q}=(B,?)$ $\wedge$  $ S_{p}=(C,?)$ $\wedge$  $\forall q' \in N_{p}/\{q\}$, $S_{q'} = (C,?)$ $\rightarrow$ $ S_{p}:=(B,q)$. 
                    \item{\textbf{F-Action::}} $S_{p}=(B,q)$ $\wedge$ $ S_{q}=(B,?)$ $\wedge$ $\forall q' \in N_{p}/\{q\}$, $S_{q'}=(F,?)$ $\rightarrow $ $S_{p}:=(F,q)$. 
                    \item{\textbf{C-Action::}} $S_{p}=(F,?)$ $\wedge$  $\forall q' \in N_{p}$, $S_{q'}=(F \vee C, ?)$  $\rightarrow$ $S_{p}:=(C,NULL)$. \\
                      \end{itemize}
            \item {\textbf{Correction (For any processor)}} 
                       \begin{itemize}
                        \item {$S_{p}=(B,q)$ $\wedge$ $S_{q}=(F \vee C,?)$  $\rightarrow$ $S_{p}:=(C,NULL)$.}
                        \item{$leaf_{p}(q)$ $\wedge$ $S_{p}=(B,q)$ $\rightarrow$ $S_{p}:=(F,q)$.}
                        \end{itemize}
   \end{itemize}
   \end{scriptsize}
\end{algorithm}

\begin{algorithm}[htb]
\caption{Message Forwarding \label{algo:MF}}
\begin{scriptsize}

   \begin{itemize}
       \item{\textbf{Message generation (For every processor) }}
       
     $\textbf{R1}$:: $Request_{p}$ $\wedge$ $ Next_{p}(d)=q$ $\wedge$ [$OUT_{p}(q)=\epsilon$ $\vee$ $OUT_{p}(q)=IN_{q}(p)$] $\wedge$ \textit{$NO\mbox{-}PIF_{p}$ }$\rightarrow$ $OUT_{p}(q):=(m,d,choice(c))$, $Request_{p}:=false$.\\

       \item{\textbf{Message consumption (For every processor) }}
       
     $\textbf{R2}$:: \textit{$\exists q \in N_{p}$, $\exists m \in M$; $Consumption_{p}(q,m)$} $\rightarrow$ $deliver_{p}(m)$, $IN_{p}(q):=OUT_{q}(p)$.\\

        \item{\textbf{Internal transmission (For processors having 2 neighbors) }}
 
           $\textbf{R3}$:: $\exists q \in N_{p}$, $\exists m \in M$, $\exists d \in \textit{I}$; $Inter\mbox{-}trans_{p}(q,m,d)$ $\wedge$ ($NO\mbox{-}PIF_{p} \vee PIF\mbox{-}Synchro_{p}(q)$) $\rightarrow$ $OUT_{p}(q'):=(m,d,choice(c))$, $IN_{p}(q):=OUT_{q}(p)$.\\

         \item{\textbf{Message transmission from $q$ to $p$ (For processors having 2 neighbors) }} 
         
              $\textbf{R4}$:: $IN_{p}(q)=\epsilon$ $\wedge$ $OUT_{q}(p)\ne \epsilon$  $\wedge$ ($NO\mbox{-}PIF_{p} \vee PIF\mbox{-}Synchro_{p}(q)$)  $\rightarrow$ $IN_{p}(q):=OUT_{q}(p)$.\\

         \item{\textbf{Erasing a message after its transmission (For processors having 2 neighbors) }}
         
              $\textbf{R5}$:: $\exists q \in N_{p}$, $OUT_{p}(q)=IN_{q}(p)$ $\wedge$ ($\forall q' \in N_{p}\setminus \{q\}$,
$IN_{p}(q')=\epsilon$) $\wedge$ ($NO\mbox{-}PIF_{p} \vee PIF\mbox{-}Synchro_{p}(q)$)   $\rightarrow$ $OUT_{p}(q):=\epsilon$, $IN_{p}(q'):=OUT_{q'}(p)$.\\
             

         \item{\textbf{Erasing a message after its transmission (For the extremities) }}
         
              $\textbf{R5'}$:: $N_{p}=\{q\}$ $\wedge$ $OUT_{p}(q)=IN_{q}(p)$ $\wedge$ $IN_{p}(q)=\epsilon$ $\wedge$
		($(p=p_0) \Rightarrow (EXT_p=\epsilon)$) 
		$\wedge$ ($NO\mbox{-}PIF_{p} \vee PIF\mbox{-}Synchro_{p}(q)$)   $\rightarrow$ $OUT_{p}(q):=\epsilon$, $IN_{p}(q):=OUT_{q}(p)$.\\
             
        \item{\textbf{Road change (For the extremities) }}
         \begin{itemize}
               \item{$\textbf{R6}$:: $Road\mbox{-}Change_{p}(m)$ $\wedge$ [$OUT_{p}(q)=\epsilon$ $\vee$ $OUT_{p}(q)=IN_{q}(p)$] $\rightarrow$ $OUT_{p}(q):=(m,d,choice(c))$, $IN_{p}(q):=OUT_{q}(p)$.}
               \item{$\textbf{R7}$:: $Road\mbox{-}Change_{p}(m)$ $\wedge$ $OUT_{p}(q) \ne \epsilon$ $\wedge$ $PIF\mbox{-}Request_{p}=false$ $\rightarrow$ $PIF\mbox{-}Request_{p}:=true$. }
               \item{$\textbf{R8}$:: $Road\mbox{-}Change_{p}(m)$ $\wedge$ $OUT_{p}(q) \ne \epsilon$ $\wedge$ $PIF\mbox{-}Request_{p}$ $\wedge$ $B\mbox{-}initiator$ $\rightarrow$ $EXT_{p}:=IN_{p}(q)$, $IN_{p}(q):=OUT_{q}(p)$.}
               \item{$\textbf{R9}$:: $p=p_0$ $\wedge$ $EXT_{p}$ $\ne$ $\epsilon$ $\wedge$ [$OUT_{p}(q)=\epsilon$ $\vee$ $OUT_{p}(q)=IN_{q}(p)$] $\wedge$ $C\mbox{-}Initiator$ $\rightarrow$  $OUT_{p}(q):=EXT_{p}$, $EXT_{p}:=\epsilon$.}
               \item{$\textbf{R10}$:: $p=p_0$ $\wedge$ $EXT_{p}$ $\ne$ $\epsilon$ $\wedge$ $OUT_{p}(q) \ne \epsilon$ $\wedge$ $OUT_{p}(q) \ne IN_{q}(p)$ $\wedge$ $C\mbox{-}Initiator$ $\rightarrow$  $EXT_{p}:=\epsilon$.}
               \item{$\textbf{R11}$:: $|N_{p}|=1$ $\wedge$ $p\ne 0$ $\wedge$ $IN_{p}(q)=(m,d,c)$ $\wedge$ $d \ne p$  $\wedge$ $OUT_{p}(q)=\epsilon$ $\wedge$ $OUT_{q}(p)\ne IN_{p}(q)$ $\rightarrow$ $OUT_{p}(q):=(m,d,choice(c))$, $IN_{p}(q):=OUT_{q}(p)$.}\\
           \end{itemize}
       \item{\textbf{Correction  (For $p_0$) }}
            \begin{itemize}
             \item{$\textbf{R12}$:: $p=p_0$ $\wedge$ $EXT_{p} \ne \epsilon$ $\wedge$ $S_{p}\ne (B,-1)$ $\rightarrow$ $EXT_{p}=\epsilon$.}
             \item{$\textbf{R13}$:: $p=p_0$ $\wedge$ $S_{p}=(B,?)$ $\wedge$ $PIF\mbox{-}Request=true$ $\rightarrow$ $PIF\mbox{-}Request=false$.}
             \item{$\textbf{R14}$:: $p=p_0$ $\wedge$ $S_{p}=(C,?)$ $\wedge$ $PIF\mbox{-}Request=true$ $\wedge$ [($IN_{p}(q)=(m,d,c)$ $\wedge$ $d=p$) $\vee$ $IN_{p}(q)=\epsilon$] $\rightarrow$ $PIF\mbox{-}Request=false$.}
            \end{itemize}       
       
  \end{itemize}
  \end{scriptsize}
  \end{algorithm}

\section{Proof of Correctness\label{sec:proof}}

In this section, we prove the correctness of our algorithm.
We first show that starting from an arbitrary configuration, our protocol is deadlock free.  Next, we show that no node
can be starved of generating a new message. Next, we show the snap-stabilizing property of our solution by showing that,
starting from any arbitrary configuration and even if the routing tables are not stabilized, every valid message is 
delivered to its destination once and only once in a finite time.  

Let us first state the following lemma:

\begin{lem}\label{PIF}
The PIF protocol (Algorithm 1) is snap-stabilizing.
\end{lem}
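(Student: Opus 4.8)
The plan is to show that the PIF protocol meets the standard \emph{propagation of information with feedback} specification from an arbitrary starting configuration: once the initiator $p_0$ launches a wave, the broadcast reaches every processor of the chain, each processor then acknowledges, and the feedback returns to $p_0$, which detects global termination in finite time. Since the protocol lives on a chain, the argument is essentially one-dimensional, so I would organize it around the single broadcast direction (away from $p_0$) and the single feedback direction (back toward $p_0$). Snap-stabilization is then obtained by proving two things: (a) every \emph{spurious} broadcast/feedback pattern present in the initial configuration is erased in finite time, and (b) a genuine wave initiated by $p_0$ runs to completion correctly and is correctly detected, the weak fairness of the daemon guaranteeing that each enabled step eventually fires (hence no deadlock stalls the wave).

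First I would deal with the arbitrary initial configuration through the two Correction actions. Call a processor in state $(B,q)$ \emph{legitimate} if following its pointer chain $q,\ldots$ leads through a consistent sequence of B-states to $p_0$ in state $(B,-1)$, and \emph{spurious} otherwise. Because the topology is a chain (each processor has at most two neighbours), every maximal spurious B-segment has an endpoint $p$ in state $(B,q)$ whose pointed neighbour $q$ is not a B-state continuing the segment: either $q$ is in $F$ or $C$, in which case the correction $S_p=(B,q)\wedge S_q=(F\vee C,?)\rightarrow S_p:=(C,NULL)$ is enabled, or $p$ is a leaf, in which case $leaf_p(q)\wedge S_p=(B,q)\rightarrow S_p:=(F,q)$ is enabled. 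Either way, by weak fairness the endpoint changes state and the spurious segment strictly shrinks; an induction on its length shows all spurious B-states vanish in finite time. Dually, the C-Actions drain leftover F-states, since an F-processor all of whose neighbours lie in $F$ or $C$ reverts to $C$. Hence the system reaches, in finite time, a configuration whose only B/F activity is that of a wave genuinely anchored at $p_0$.

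Next I would analyse a clean wave. Once $p_0$ fires B-Action via \emph{init-PIF}, I prove by induction along the chain that the broadcast descends: if the processor at distance $k$ from $p_0$ is in B-state pointing back toward $p_0$ while its far neighbour is still in C-state, then the far neighbour's B-Action becomes enabled and, by weak fairness, fires, extending the broadcast to distance $k+1$; this continues until the far extremity, a leaf, is reached. The leaf then executes F-Action, and a second induction shows the feedback ascends monotonically: a processor switches to F-state only once its broadcast child already holds F-state, so the F-front climbs back to $p_0$. When all of $p_0$'s neighbours are in F-state, $p_0$'s C-Action is enabled and returns it to $(C,NULL)$, completing the wave in finitely many steps.

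The main obstacle is the interaction between these two phases: I must rule out that a spurious B- or F-state, or a half-corrected segment, lets $p_0$ see its neighbours in F-state and \emph{wrongly} declare termination before the broadcast has actually reached every processor. The crux is therefore an invariant asserting that, from the instant $p_0$ starts a genuine wave, F-state can appear at a processor only \emph{behind} the front of a broadcast anchored at $p_0$ — \ie no processor acknowledges without first having been legitimately broadcast to. Establishing this invariant and showing it is preserved under arbitrary daemon choices, including steps that concurrently correct leftover states, is where most of the care is needed; once it holds, soundness of $p_0$'s termination detection, and hence the full PIF specification from any initial configuration, follows.
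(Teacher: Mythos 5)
Your attempt mis-models the algorithm on its central novelty: Algorithm~1 is not the standard chain PIF, but a PIF with \emph{dynamic leaves}. The predicate $leaf_{p}(q)$ does not mean ``$p$ is the far extremity of the chain''; it is a condition on the \emph{forwarding} buffers ($Consumption_{p}(q)$, or $OUT_{p}(q')=\epsilon$, or $OUT_{p}(q')=IN_{q'}(p)$) evaluated at any processor whose neighbour $q$ is currently broadcasting. Such a processor executes the \emph{leaf} F-Action and the broadcast stops there; the paper relies on exactly this behaviour (``it is not necessary to reach the other extremity of the chain''), since the only purpose of the wave is to escort a free slot until a free buffer on $C2$ is found. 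Consequently both pillars of your ``clean wave'' analysis fail: the induction step ``the far neighbour's B-Action becomes enabled, extending the broadcast to distance $k+1$, \ldots until the far extremity is reached'' is false, because the far neighbour may instead satisfy $leaf$ and fire the F-Action; and the specification you set out to prove --- the broadcast reaches \emph{every} processor before $p_{0}$ detects termination --- is simply not satisfied by this algorithm, so no proof of it can succeed.

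What the lemma actually needs, and what the paper's (admittedly sketchy, by-analogy-to-\cite{Bui07}) proof does supply, is an argument that dynamic leaves cannot corrupt a wave already in progress: once a processor has executed the B-Action of a wave, it can no longer execute the leaf F-Action (which requires $S_{p}=(C,NULL)$); it can only feed back via the internal F-Action, which requires its pointed neighbour to be in a $B$ state and its other neighbour to already be in state $(F,?)$. Hence no processor becomes a dynamic leaf of a wave after joining its broadcast, so the feedback received by $p_{0}$ is genuinely anchored at a leaf (dynamic or physical) of the current, possibly truncated, wave, and termination detection is sound with respect to that wave. Your error-correction analysis of spurious B/F segments and your invariant ``F appears only behind a broadcast anchored at $p_{0}$'' are reasonable ingredients for a self-contained proof (which the paper itself omits), but they would have to be restated relative to this leaf-truncated specification, and the coupling with Algorithm~2's buffer contents --- which makes $leaf_{p}(q)$ vary over time --- must be handled explicitly; your attempt never engages with it.
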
  


\begin{proof}
Note that the PIF algorithm introduced here is similar to the one proposed in \cite{Bui07} which is a snap stabilizing algorithm. The new thing is that we introduced the idea of dynamic leafs, processors that satisfy some properties and act like a physical leaf (they execute the F-action once they have a neighbor in a broadcast phase). Hence instead of reaching all the nodes of the chain, the PIF wave stops advancing when it meets a dynamic leaf. Note that once an internal processor $p$ executes the B-Action, it cannot execute the F-Action unless is has a neighbor $q$ such as $S_{q}=(F,p)$ (it cannot become a leaf) since to execute the F-action by any processor $p$, $S_{p}=(C,NULL)$ or for the internal processor that executes the B-Action $S_{p}=(B,q)$ ($q \in N_{p}$, $S_{q}=(B,p')$). Thus no processor becomes a dynamic leaf of the PIF wave once it executed the B-Action of the same PIF wave.
In another hand, note that the variable PIF-Request is a shared variable between the PIF algorithm and the forwarding algorithm, its role is to give the signal to the initiator to initiate the PIF wave. Hence we can deduct by analogy that the PIF algorithm proposed here is a snap stabilizing algorithm.
\end{proof}

We now show (Lemma~\ref{INTFree}) that the extra buffer located at $p_0$ cannot be infinitely continuously busy.  As explained in
Section~\ref{sec:algo}, this solves the problem of deadlocks. 

\begin{lem}\label{INTFree}
If the extra buffer of the processor $p_0$ ($EXT_{p_0}$) which is at the extremity of the chain contains a message then
this buffer becomes free after a finite time.
\end{lem}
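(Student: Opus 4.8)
The plan is to reason directly about the handful of actions that ever write $EXT_{p_0}$, and to show that from the moment $EXT_{p_0}$ holds a message, control is forced into one of the actions that empties it. Inspecting Algorithm~\ref{algo:MF}, the only rules assigning to $EXT_{p_0}$ are R8 (which fills it) and R9, R10, R12 (which each free it). First I would record the state component $S_{p_0}$ in each guard: R9 and R10 both require the predicate $C\mbox{-}Initiator$, hence $S_{p_0}=(B,-1)$ together with all neighbours in state $F$, while R12 requires $S_{p_0}\ne(B,-1)$. Since these conditions on $S_{p_0}$ are complementary, the argument naturally splits on whether $S_{p_0}=(B,-1)$, and in each branch I would exhibit an action that is executed in finite time and frees the buffer.

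\textbf{The easy branch ($S_{p_0}\ne(B,-1)$).} Here R12 is enabled as soon as $EXT_{p_0}\ne\epsilon$, and I would check it cannot be disabled except by being executed. The value of $EXT_{p_0}$ can be changed only by $p_0$ itself, and the one filling action R8 demands $Road\mbox{-}Change_{p_0}$, which in turn requires $EXT_{p_0}=\epsilon$; thus R8 and R12 are mutually exclusive and $EXT_{p_0}$ cannot be re-filled while busy. The only way to leave this branch is for $p_0$ to set $S_{p_0}=(B,-1)$, achievable solely through the initiator's B-Action, whose guard requires $S_{p_0}=(C,NULL)$ --- a state in which R12 is already enabled. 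By the composition rule, in the step where $p_0$ is selected the B-Action and R12 fire together and $EXT_{p_0}$ is emptied. Hence R12 stays continuously enabled until $p_0$ is chosen, and weak fairness guarantees this happens in finite time.

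\textbf{The hard branch ($S_{p_0}=(B,-1)$).} Here a PIF wave is, or appears to be, in progress, and the key is to invoke Lemma~\ref{PIF}: since the PIF protocol is snap-stabilizing, starting from $S_{p_0}=(B,-1)$ the broadcast reaches the (possibly dynamic) leaves and the feedback returns, forcing $C\mbox{-}Initiator$ to hold in finite time. I would then argue that $EXT_{p_0}$ stays busy throughout this interval: R9 and R10 are disabled until $C\mbox{-}Initiator$ holds, R12 is disabled because $S_{p_0}=(B,-1)$, and R8 only fills. Moreover the initiator pointer $-1$ is never a real neighbour, so the PIF correction actions cannot reset $S_{p_0}$, which can leave $(B,-1)$ only through the legitimate C-Action once every neighbour is in $F$. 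Finally, when $C\mbox{-}Initiator$ holds with $EXT_{p_0}\ne\epsilon$, exactly one of R9, R10 is enabled (they are complementary on the state of $OUT_{p_0}(q)$); therefore in the very step where $p_0$ executes its C-Action it simultaneously executes R9 or R10, freeing $EXT_{p_0}$.

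\textbf{Main obstacle.} I expect the delicate point to be justifying that $EXT_{p_0}$ is never refilled between becoming busy and being emptied, and that the freeing action is co-enabled with the change of $S_{p_0}$ in the same atomic step. This rests on two observations that must be made explicit: the mutual exclusion between R8 and R12 through the $EXT_{p_0}=\epsilon$ clause hidden inside $Road\mbox{-}Change$, and the fact that the composition model executes all enabled actions of a selected processor together, so that the forwarding rule fires in lock-step with the PIF B-/C-Action rather than one step too late. Once these are settled, the reliance on Lemma~\ref{PIF} for termination of the wave, and on weak fairness for eventual selection of $p_0$, is routine.
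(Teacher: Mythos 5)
Your proposal is correct and follows essentially the same route as the paper's own proof: a case split on whether $S_{p_0}=(B,-1)$ (a PIF wave in progress) or not, using R12 plus weak fairness in the latter case, and Lemma~\ref{PIF} (termination of the wave) followed by the complementary rules R9/R10 in the former. Your additional observations --- the mutual exclusion of R8 and R12 via the $EXT_{p_0}=\epsilon$ clause of $Road\mbox{-}Change$, and the simultaneous firing of forwarding rules with the PIF B-/C-Actions under the composition model --- only make explicit details the paper leaves implicit.
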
  

\begin{proof}
We know from Lemma~\ref{PIF} that each time $p_0$
launches a PIF wave, then this wave terminates.  When this happens, there are two cases:
\begin{enumerate}
\item{$S_{p}=(C,NULL)$}. In this case $R12$ is enabled on $p$. Since the daemon is weakly fair we are sure that $R12$ will be executed in a finite time. Thus $EXT_{p}$ will be free in a finite time too.
\item{$S_{p}=(B,?)$}. In this case, a PIF wave is executed on the chain. Note that $p$ is the initiator (it is eventually considered as the initiator). According to Lemma \ref{PIF} the PIF is a Snap stabilizing algorithm. Hence, there will be a time when $S_{q}=(F,p)$, $q \in N_{p}$. Two sub cases are possible:
\begin{itemize}
\item\label{BINT}{Either $OUT_{p}(q)=\epsilon$ or $OUT_{p}(q)=IN_{q}(p)$}. In this case $R9$ is enabled on $p$. Since the daemon is weakly fair, this rule will be executed in a finite time. Hence the message that is in the extra buffer will be copied in $OUT_{p}(q)$ and deleted from $EXT_{p}$ (see Rule $R9$). Hence $EXT_{p}=\epsilon$.
\item{$OUT_{p}(q)\ne\epsilon$ and $OUT_{p}(q)\ne IN_{q}(p)$}. Since $R10$ is enabled on $p$ and the daemon is weakly fair, $R10$ will be executed in a finite time. Thus, the message that is in the extra buffer is deleted.  
\end{itemize}
 \end{enumerate}
 
From the cases above, we deduct that in the case where the extra buffer of $p$ contains a message, then this message will be either deleted or copied in $OUT_{p}(q)$. Hence $EXT_{p}$  will be free in a finite time and the lemma holds.     
\end{proof}

 We deduce from Lemma~\ref{INTFree} that if the routing tables are not stabilized and if there is a message locking the
Input buffer of $p_0$, then this message is eventually copied in the extra buffer.  Since the latter is infinitely
often empty (Lemma~\ref{INTFree} again). 

From now on, we suppose that no generated message is deleted (we prove this property latter).

\begin{lem}\label{DeadStable}
All the messages progress in the system even if the routing tables are not stabilized.
\end{lem}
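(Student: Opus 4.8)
The plan is to prove progress by a \emph{blocked-chain} argument that reduces the claim to the two previous lemmas. First I would fix the meaning of ``progress'': a message $m$ occupying a buffer $b$ \emph{progresses} if in finite time it leaves $b$, either by being copied into its next buffer $nb(m,b)$ (an ordinary move, a road change, or the evacuation through $EXT_{p_0}$) or by being consumed at its destination. Recall from the overview that a buffer is \emph{free} when it is empty or equal to the input buffer preceding it in the buffer graph, and that $m$ can move out of $b$ exactly when its next buffer is free. Hence it suffices to show that for every busy buffer $b$ holding $m$, the buffer $nb(m,b)$ is free infinitely often.

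Suppose, for contradiction, that some message never progresses. After finite time the set of buffers that remain busy forever is fixed; pick such a buffer $b_0$ holding $m_0$. Since $m_0$ cannot move, $b_1 := nb(m_0,b_0)$ is busy forever as well; let $m_1$ be its occupant, which cannot move either, so $b_2 := nb(m_1,b_1)$ is busy forever, and so on. Because the buffer graph is the union of the two finite directed paths $C1$ and $C2$ (Figure~\ref{BG}), joined only at the two extremities — directly at the non-special end and through $EXT_{p_0}$ at $p_0$ — the successor relation followed from $b_0$ must reach one of three terminal situations: (i) an input buffer $IN_p(q)$ whose occupant has destination $p$; (ii) the input buffer of the extremity $p \neq p_0$; or (iii) the input buffer $IN_{p_0}(q)$ of the special processor, with a message destined elsewhere.

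I would then discharge each terminal case and exhibit a free slot that propagates \emph{backwards} along the blocked chain. In case (i), $Consumption_{p}(q,m)$ holds, so $R2$ is enabled and, by weak fairness, fires, freeing $IN_p(q)$. In case (ii) the extremity is not $p_0$, so once its outgoing buffer is clear the direction-changing rule $R11$ applies and frees $IN_p(q)$. In case (iii), Lemma~\ref{INTFree} guarantees that $EXT_{p_0}$ is free infinitely often, so the road-change sequence $R6$--$R10$ eventually evacuates the message from $IN_{p_0}(q)$. In every case a free slot appears at the head of the blocked chain, and since free slots travel in the direction opposite to messages (Figure~\ref{EmptySlot}), this slot moves step by step up the chain via $R3$, $R4$, and $R5$, letting each of $m_0,m_1,\dots$ advance one buffer. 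This contradicts the assumption that $b_0$ stays busy forever.

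The delicate point, and the step I expect to be the main obstacle, is the interaction with the PIF control: the ordinary moves $R3$, $R4$, $R5$ are guarded by $NO\mbox{-}PIF_{p} \vee PIF\mbox{-}Synchro_{p}(q)$, so while a wave is running a buffer may forward only when synchronized with it. I would therefore have to argue that the wave does not freeze progress but \emph{carries} the free slot: by Lemma~\ref{PIF} every wave terminates, and the predicate $PIF\mbox{-}Synchro_{p}(q)$ is designed precisely so that the free slot escorted by the broadcast on $C1$ and by the feedback on $C2$ keeps moving in lockstep with the wave. The careful accounting that $PIF\mbox{-}Synchro_{p}(q)$ becomes true often enough along the escorted slot — so that progress during a wave is no worse than progress under $NO\mbox{-}PIF_{p}$ — is where the real work lies.
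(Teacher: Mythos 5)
Your overall strategy is the same as the paper's: the paper proves this lemma by induction on the (even) distance of each input buffer from $IN_{p_0}(q)$, anchoring the base case at $p_0$ exactly as you do (consumption by $R2$, direct turn by $R6$, or evacuation into $EXT_{p_0}$ via $R7$/$R8$ backed by Lemma~\ref{INTFree} and Lemma~\ref{PIF}), and then propagating the freed slot upstream buffer by buffer; your blocked-chain contradiction is that induction run in reverse, so the decomposition is not genuinely different. (One small structural remark: your case (ii) is not really a terminal case of the successor chain, since $R11$ requires $OUT_{p'}(q)=\epsilon$, so the chain simply continues through the turn at the far extremity; the only true anchors are consumption points and $IN_{p_0}(q)$, which suffices because the buffer graph is a cycle through $p_0$.)

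However, there is a concrete gap in your backward-propagation step. You claim the freed slot ``moves step by step up the chain via $R3$, $R4$, and $R5$, letting each of $m_0,m_1,\dots$ advance one buffer,'' but when the slot reaches an output buffer $OUT_p(q)$, rule $R1$ (message generation) competes with $R3$ (internal transmission) for that buffer. Weak fairness of the daemon does not prevent generation from winning every time the buffer becomes free, which would starve the blocked message in $IN_p(q')$ forever and void your contradiction. The paper closes exactly this hole in its inductive step by invoking the fair pointer defined on each output buffer: if generation steals the slot, the pointer guarantees that the next action performed on that buffer is the internal transmission, so case~$R3$ is eventually retrieved. Your proof never mentions the fair pointer, and without it the step fails. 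Separately, you explicitly defer the interaction with the guard $NO\mbox{-}PIF_{p} \vee PIF\mbox{-}Synchro_{p}(q)$ as ``where the real work lies''; the paper's own proof of this lemma is admittedly also silent on that guard (the synchrony is only established later, in Lemma~\ref{VPIF}, for valid waves), but a proof that names its central obstacle and leaves it unresolved is not yet a proof, so this part must be completed, e.g., by arguing that every wave terminates (Lemma~\ref{PIF}) and that after termination $NO\mbox{-}PIF_{p}$ holds again, or by invoking the escort synchrony of Lemma~\ref{VPIF}.
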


\begin{proof}
Note that if $\exists q \in N_{p}$, $IN_{p}(q)$ is free then if there is a message in $OUT_{q}(p)$, then this message is automatically copied in $IN_{p}(q)$. Hence it is sufficient to prove that the input buffer are free in a finite time. Thus Let's prove that $\forall p \in I$, when there is a message in $IN_{p}(q)$, this message is deleted from $IN_{p}(q)$ in a finite time ($q\in N_{p}$). 


Note that the input-buffers are all at an even distance from the input buffer of the processor $p_{0}$. Let define $\delta$ as the distance between the input buffer of the processor $p$ and the input buffer of processor $p_{0}$ (In the direction of the buffer graph taken in account $UT1$). The lemma is proved by induction on $\delta$. We define for this purpose the following predicate $P_{\delta}$: \\

If there is a message $mh$ in $IN_{p}(q)$ such as $IN_{p}(q)$ is at distance $\delta$ from the input buffer of $p_{0}$ then one of these two cases happens:
\begin{itemize}
\item{$m$ is consumed and hence delivered to its destination.}
\item{$m$ is deleted from the input buffer and copied either in $EXT_{p}$ or  $OUT_{p}(q')$ in a finite time.}
\end{itemize}

\textbf{Initialization}. Let's prove the result for $P_{0}$. Suppose that there is a message $m$ in $IN_{p}(q)$ such as $p=p_{0}$ and $q \in N_{p}$. Two cases are possible according to the destination ($d$) of $m$:
\begin{itemize}
\item{$d=p$. In this case, since the daemon is weakly fair and since $R2$ keep being enabled on $p$ then $R2$ will be executed on $p$ in a finite time and the message $m$ in $IN_{p}(q)$ is consumed. Thus $P_{0}$ is true.}
 \item{$d\ne p$}. Since the daemon is weakly fair we are sure that $p$ will be activated. Two cases are possible:
 \begin{itemize}
\item{$OUT_{p}(q)=\epsilon$ or $OUT_{p}(q)=IN_{q}(p)$}. In this case $R6$ is enabled on $p$. Hence the message $m$ is copied in $OUT_{p}(q)$ and deleted from $IN_{p}(q)$ since a new value overwrite it (see $R6$). Thus $P_{0}$ is true.
\item{$OUT_{p}(q) \ne \epsilon$ and $OUT_{p}(q) \ne IN_{q}(p)$}. According to Lemma \ref{INTFree}, $EXT_{p}$ will be free in a finite time. In another hand since the PIF is a snap stabilizing algorithm (refer to Lemma \ref{PIF}, we are sure that the B-Action of the initiator will be enabled on $p$ in a finite time). Hence the message $m$ will be copied in this case in $EXT_{p}$ and deleted from $IN_{p}(q)$ (Note that in the case where $PIF\mbox{-}Reaquest= false$ then it is set at true (see rule $R7$)). Thus $P_{0}$ is true.
\end{itemize}
 
\end{itemize}
 
In both cases either the message is consumed or it is removed from $IN_{p}(q)$. Thus $P_{0}$ is true.\\ 

\textbf{Induction}. let $\delta$ $\geq$ $1$. We assume that $P_{2\delta}$ is true and we prove that $P_{2\delta +2}$ is true as well (Recall that the input buffers are at an even distance from the input buffer of $p_{0}$). Let $IN_{q}(p)$ be the buffer at distance $2\delta$ from the input buffer of $p_{0}$ and $IN_{p}(q')$ the one that is at distance $2\delta+2$ and contains the message $m'$. \\
In the case where the destination of $m'$ is $p$ then it will be consumed in a finite time (the daemon is weakly fair and $R2$ keep being enabled on $p$. Thus $p$ will execute $R2$ in a finite time). Hence $P_{2\delta+2}$ is true. In the other case (the destination of $m'$ is different from $p$), since $P_{2\delta}$ is true then if there is a message $m$ in $IN_{q}(p)$ then we are sure that this message will be either consumed or copied in $OUT_{q}(q'')$. Thus $IN_{q}(p)=OUT_{p}(q)$ ($OUT_{p}(q)$ is free). Two cases are possible according to the rule that is executed on $OUT_{p}(q)$ (depending on the value of the pointer on $OUT_{p}(q)$) :
\begin{enumerate}
\item\label{INTER}{$p$ executes $R3$. 
In this case the message $m'$ is copied in $OUT_{p}(q)$ and deleted from $IN_{p}(q')$ since a new value overwrite it (refer to Rule $R3$). Hence $P_{2\delta+2}$ is true.}
\item{$p$ executes $R2$ (it generates a message). Hence $OUT_{p}(q)=m''$ ($m''$ is the message generated by $p$). However, since $P_{2\delta}$ is true, then the message in $IN_{q}(p)$ will be deleted from the buffer ($q$ performs either an internal transmission or consume the message). Hence $IN_{q}(p)=OUT_{p}(q)$ in a finite time. Nevertheless, since $p$ generated a message in the previous step, we are sure that it will execute $R3$ (since the pointer on the output buffer $OUT_{p}(q)$ is fair). Thus we retrieve case \ref{INTER}.}
\end{enumerate}

From the proof above, we deduct that all the messages in the chain progress in the system and no deadlock happens even if the routing tables are corrupted.

\end{proof}

Let us call a {\em valid PIF} wave every $PIF$ wave that is initiated by the processor $p_{0}$ at the same time as executing $R8$.

\begin{lem}\label{VPIF}
For every valid $PIF$ wave, when the $C\mbox{-}Action$ is executed in the initiator either $OUT_{p}(q)=IN_{q}(p)$ or $OUT_{p}(q)=\epsilon$. 
\end{lem}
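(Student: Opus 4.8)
\begin{sketch}
The plan is to follow, throughout the whole lifetime of the valid wave, the single free slot that the wave is designed to escort, and to show that it is deposited in $OUT_{p_0}(q)$ exactly when the feedback returns to the initiator, that is, when $S_q=(F,?)$ and the $C\mbox{-}Action$ of the initiator becomes enabled (here $q$ denotes the unique neighbour of $p_0$). First I would pin down the configuration at the birth of the wave. A valid wave starts with the simultaneous execution of $R8$ and the initiator's $B\mbox{-}Action$; right after $R8$ one has $IN_{p_0}(q):=OUT_q(p_0)$, hence $IN_{p_0}(q)=OUT_q(p_0)$, so by the definition of a free buffer $OUT_q(p_0)$ is free. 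Thus the wave is born with a free slot one hop inside the chain, on the track that carries messages toward $p_0$, while $EXT_{p_0}$ now holds the wrong-direction message and $OUT_{p_0}(q)$ is still busy.

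Next I would state an escort invariant and prove it by induction on the steps of the wave, using that, by Lemma~\ref{PIF}, the wave is a well-formed broadcast phase followed by a feedback phase. The decisive structural fact is that, during an active wave, the forwarding rules that actually move messages ($R3$, $R4$, $R5$, $R5'$) are disabled unless $PIF\mbox{-}Synchro_p(q)$ holds, because $NO\mbox{-}PIF_p$ fails at every processor touched by the wave; and $PIF\mbox{-}Synchro_p(q)$ is true exactly when the matching $B\mbox{-}$ or $F\mbox{-}Action$ is enabled at $p$ with $S_q=(B,?)$. Hence each advance of the PIF front is executed in the very same step as the forwarding move that slides the escorted free slot one buffer further, yielding the broadcast invariant ``a free slot sits at the broadcast front, moving away from $p_0$'' and, symmetrically, the feedback invariant ``a free slot sits at the feedback front, moving toward $p_0$ on the return track''. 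Two non-interference facts keep this correspondence exact: generation $R1$ cannot plug the slot (it requires $NO\mbox{-}PIF_p$), and no unsynchronised forwarding move can create or destroy a competing slot; both follow from the guards above.

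I would then handle the turnaround and conclude. If the broadcast reaches the far extremity, the escorted slot empties that extremity's output buffer, $R11$ fires there, and this simultaneously injects a fresh free slot on the return track, which the feedback escorts back. If instead the broadcast meets a dynamic leaf $p$ (a processor with $leaf_p(q)$ true, which by the definition of $leaf$ already owns a free return buffer, namely $OUT_p(q')=\epsilon$, $OUT_p(q')=IN_{q'}(p)$, or $consumption_p(q)$), the feedback starts immediately there and escorts that pre-existing slot; since such a leaf is, going outward from $p_0$, the first processor with a free return buffer, the slot is the one closest to $p_0$ and is dragged all the way in. In both cases the feedback invariant is correctly initialised, so when the feedback front reaches $q$ (giving $S_q=(F,?)$, which together with $S_{p_0}=(B,-1)$ is precisely the guard of the initiator's $C\mbox{-}Action$) the escorted slot has been delivered adjacent to $p_0$, enabling $R4$ at $q$ to pull $OUT_{p_0}(q)$ across the link and leaving $OUT_{p_0}(q)=IN_q(p_0)$, i.e. free. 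It cannot instead have been emptied by $R5'$, since $R5'$ at $p_0$ requires $EXT_{p_0}=\epsilon$, which fails throughout a valid wave. This is exactly the claim.

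The main obstacle is making the escort invariant rigorous: one must fix the exact lockstep between each PIF action and its synchronised forwarding rule through $PIF\mbox{-}Synchro$, and rule out every interference (spurious generation, a second slot overtaking the front, or the slot being consumed en route). The dynamic-leaf turnaround is the most delicate sub-case, because there the slot finally delivered to $p_0$ is not the one created by $R8$ but a pre-existing return-track slot, so one must argue that the $leaf$ predicate guarantees this slot exists and is the closest to $p_0$, ensuring the feedback invariant holds from the instant the feedback begins.
\end{sketch}
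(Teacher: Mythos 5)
Your proposal is correct and follows essentially the same route as the paper's proof: both establish the initial free slot from the simultaneity of $R8$ and the initiator's $B\mbox{-}Action$, prove an escort (synchrony) invariant by induction along the chain for the broadcast and feedback phases using the $PIF\mbox{-}Synchro$ guards, handle the turnaround with the same two-case analysis (far extremity via $R11$/$R2$ versus a dynamic leaf whose $leaf_p(q)$ predicate already supplies a free return buffer), and conclude that the slot is deposited in $OUT_{p_0}(q)$ exactly when the initiator's $C\mbox{-}Action$ becomes enabled. The only cosmetic difference is that you name $R4$ as the final synchronised move at the initiator's neighbour, whereas the paper allows any of $R3$/$R4$/$R5$ there; the effect $IN_q(p_0):=OUT_{p_0}(q)$ is the same in all cases.
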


\begin{proof}
The idea of the proof is as follows: 
\begin{itemize}
\item{We prove first that during the broadcast phase there is a synchrony between the PIF and the forwarding
algorithm. Note that when the message that was in the input buffer of the initiator is copied in the extra buffer, the
input buffer becomes free. The free slot in that buffer progresses in the chain at the same time as the broadcast of the PIF
wave.}
\item{Once the PIF reaches a leaf, a new buffer becomes free in $C2$ (refer to Figure \ref{BG}).
\item{As in the broadcast phase, there is a synchrony between the PIF and the forwarding algorithm
during the feedback phase. (The feedback will escort the new free slot on $C2$ to the output buffer of $p_0$.}})
\end{itemize}

Let's prove that during the broadcast phase there is a synchrony between the PIF and the forwarding algorithms.
Let's define for this purpose $\lambda$ the distance between the processor $p$ and  the processor $p_{0}$. We're going to prove the result  by induction on $\lambda$. let's define fot this purpose the predicate $P_{\lambda}$ as follow: \\

When the PIF wave is initiated and for each extra processor $p$ ($\neg leaf \wedge p \ne p_0$) that executes the B-action we have: $\exists$! $q \in N_{p}$, $S_{q}=(B,?)$, $S_{p}=(B,q)$ $\wedge$  $IN_{p}(q')=OUT_{q'}(p)$ $\wedge$ $S_{q'}=(C,NULL)$ ($q' \in N_{p}/\{q\}$).\\

\textbf{Initialization}. Let's prove that $P_{1}$ is true. Since the PIF wave is valid, when the PIF wave was
initiated, $R8$ was executed at the same time. Hence, the message that was in $IN_{p_{0}}(q)$ was copied in
$EXT_{p_{0}}$, $IN_{p_{0}}(q)=OUT_{q}(p_{0})$, $S_{p_{0}}=(B,-1)$ and $S_{q}=(C,NULL)$. Since $q$ is not a leaf only
$R3$ or $R5$ and possibly $R2$ are enabled on $q$. Note that after the execution of one of these rules
$IN_{q}(q')=OUT_{q'}(q)$ ($q' \in N_{q}/\{p_{0}\}$). In another hand $R3$ and $R5$ are not enabled only if the
$B\mbox{-}action$ of the internal processor is enabled as well. Thus when the B-Action is executed (we are sure that this
will happen since the PIF algorithm is snap stabilizing and the daemon is weakly fair) either $R3$ or $R5$ (possibly
$R2$) are executed at the same time (Recall that when two actions from the PIF and the forwarding algorithm are
enabled on the same processor at the same time they are both executed). Hence $IN_{q}(q')=OUT_{q'}(q)$,
$S_{q}=(B,p_{0})$ and $S_{q'}=(C,NULL)$. Thus $P_{1}$ is true. Note that if $R2$ is executed alone before the B-Action
then either $R3$ or $R5$ are still enabled on $q$. Hence when the B-Action is executed one of these two actions are
executed as well.

\textbf{Induction}. Let $\lambda$ $\geq$ $2$. We assume that $P_{\lambda}$ is true and we prove that $P_{\lambda +1}$ is true as well. Let $q$ and $p$ be the processors that are at distance $\lambda$ and $\lambda +1$ respectively from the processor $p_{0}$. Since $P_{\lambda}$ is true, when the B-Action of the internal processor is executed on $q$, $S_{q}=(B,q'')$, $IN_{q}(p)=OUT_{p}(q)$ and $S_{p}=(C,NULL)$. In the same manner as the proof of $P_{1}$, $R3$ or $R5$ and possibly $R2$ is enabled on $p$. Note that $R3$ or $R5$ keep being enabled unless they are executed (Note that $R1$ cannot be executed since there is a PIF wave that is executed in the neighbourhood. Thus no message is generated). In another hand $R3$ and $R5$ cannot be executed unless the B-Action is enabled as well. Hence when the B-Action is executed either $R3$ or $R5$ or $R2$ is executed at the same time. Hence $IN_{p}(q')=OUT_{q'}(p)$, $S_{p}=(B,q)$ and $S_{q'}=(C,NULL)$. Thus $P_{\lambda +1}$ is true. 

We can deduct that for the last processor $p$ that is an internal processor ($p \ne p_0$ and $\neg$ leaf), $S_{q}=(B,q'')$ and $IN_{p}(q')=OUT_{q'}(p)$, $S_{p}=(B,q)$ and  $S_{q'}=(C,NULL)$. Since $p$ is the last processor which is an internal processor then $q'$ is a leaf. Two cases are possible:
\begin{itemize}
\item{The leaf is the processor $q'$ that is at the extremity of the chain such as $q' \ne p_0$. Either $R2$ or $R11$ are enabled on $q'$. Note that the F-Action is enabled as well since $q'$ is a leaf and $S_{p}=(B,?)$. When one of these two rules is executed with the F-Action, $S_{q'}=(F,p)$ and $IN_{q'}(p)=OUT_{p}(q')$.}
\item{The leaf $q'$ is not the processor that is at the border of the chain.} In this case either $R2$ or $R3$ or $R4$ are enabled. Recall that the F-action is enabled as well. Then once the F-Action of the internal processor and one of these rules are executed, $S_{q'}=(F,p)$ and $IN_{q'}(p)=OUT_{p}(q')$.
\end{itemize}

Note that in both cases, once the leaf $q'$ executed the F-Action we have the following property: $S_{q'}=(F,p)$ and $IN_{q'}(p)=OUT_{p}(q')$. Now in the same manner that we proved the synchrony between the PIF and the forwarding algorithm during the broadcast phase. We prove the synchrony between these two algorithms during the feedback phase. The proof is by induction on $\lambda$ using the following predicate: For every internal processor $p$ that executes the F-Action, $S_{p}={F,q}$, $S_{q'}=(F,p)$, $S_{q'}=(B,?)$ and $IN_{p}(q)=OUT_{q}(p)$ ($q,q' \in N_{p}, q\ne q'$). Then when the last internal processor executes that F-action (note that the last processor is the one that is neighbour to the initiator) these properties are verified. Hence the Output buffer of the initiator is becomes free and the lemma holds.

\end{proof}

In the remainder, we say that a message is in a {\em suitable} buffer if the buffer is on the right direction to its
destination.  
A message is said to be deleted if it is removed from the system without being delivered.

Let $m$ be a message. According to Lemma \ref{DeadStable}, $m$ progresses in the system (no deadlock happens and no message stays in the same buffer indefinitely). 
So, if $m$ is in a buffer that is not suitable for it, then $m$ progresses in the system according to the buffer
graph. Thus, it eventually reaches an extremity, which changes its direction.  Now, $m$ is ensured to reach its
destination, leading to the following lemma:

\begin{lem}\label{changeR}
For every message that is not in a suitable buffer, it will undergo exactly a single route change.
\end{lem}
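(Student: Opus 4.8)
The plan is to exploit the fact that the underlying topology is a linear chain, so that \emph{suitability} is a purely geometric predicate: from the current position of a message $m$ with destination $d$, there is a unique direction of the buffer graph that carries $m$ toward $d$, and this direction depends only on the relative positions of $m$ and $d$, not on the (possibly corrupted) routing tables. I would therefore split the claim into a lower bound (\textbf{at least} one route change) and an upper bound (\textbf{at most} one), the conjunction giving ``exactly one''. Throughout I would lean on Lemma~\ref{DeadStable} for progression and on the structural observation that internal processors only relay a message straight through via $R3$ (from $IN_{p}(q)$ to $OUT_{p}(q')$), never reversing its direction; only the road-change rules at an extremity reverse a message.

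For the lower bound, fix $m$ in a buffer that is not suitable for it. By definition this buffer carries $m$ \emph{away} from $d$, hence toward the extremity $p_e$ lying on the opposite side of $m$'s position from $d$; in particular $p_e \ne d$, and no node strictly between $m$'s position and $p_e$ equals $d$. Consequently the consumption predicate cannot become enabled at any of these intermediate nodes, so $m$ is never consumed before reaching $p_e$; by Lemma~\ref{DeadStable} it keeps advancing and by the straight-through relaying of $R3$ it reaches the input buffer of $p_e$ in finite time. Since $p_e \ne d$, consumption is still disabled there and the only applicable actions are the road-change rules ($R6$ directly via $UT1$, or $R7$--$R9$ through the extra buffer when $p_e = p_0$). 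Using Lemma~\ref{INTFree} (the extra buffer is infinitely often free) and weak fairness, one of these rules fires and places $m$ in the output buffer of $p_e$, directed back into the chain. This is one route change, after which $m$ occupies a suitable buffer, since from the extremity the unique in-chain direction necessarily points toward $d$.

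For the upper bound I would show that after this single reversal $m$ is delivered before it can reach the opposite extremity. Travelling now in the suitable direction, $m$ again makes continual progress by Lemma~\ref{DeadStable}, and the straight-through relaying of $R3$ keeps it on that direction; because the chain is linear, this direction passes through every node between $p_e$ and $d$, in particular through $d$ itself. When $m$ reaches the input buffer of $d$, the predicate $Consumption_{d}$ holds, while $R3$ is disabled there because its guard requires $d \ne p$; hence by weak fairness $R2$ fires and $m$ is delivered. Thus $m$ leaves the system at $d$ and never reaches the second extremity, so no second route change is possible. Combining the two bounds yields exactly one route change.

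The delicate point is the upper bound, namely being certain that the reversed message heads \emph{toward} $d$ and is captured \emph{at} $d$ rather than overshooting it. I would make explicit the two structural facts it rests on: (i) on a linear chain each buffer-graph arc transports messages in one fixed direction and internal relaying ($R3$) never alters that direction, so a single reversal orients $m$ toward $d$ permanently; and (ii) at the destination node the guards of $R2$ and $R3$ are mutually exclusive through the test $d = p$, so $m$ cannot be relayed past $d$. Notably, routing-table stabilization is not needed anywhere in this argument, precisely because suitability and the post-reversal direction are determined by the topology alone and not by $Next_{p}(\cdot)$.
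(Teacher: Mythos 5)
Your overall structure (progression to an extremity via Lemma~\ref{DeadStable}, reversal there, then delivery at the destination before the other extremity can be reached) matches the paper's proof, and your upper bound is in fact a more explicit version of the paper's argument: the paper merely notes that after the reversal the message ``can visit all the processors'' and hence meets its destination, while you make precise that $R3$ is disabled at the destination (its guard requires $d \ne p$), so only $R2$ can act there. However, there is a genuine gap in your lower bound at the extremity $p_0$. When $OUT_{p_0}(q)$ is busy, the message is moved into the extra buffer by $R8$ together with the initiation of a PIF wave, and at the end of that wave \emph{two} rules compete: $R9$, which re-injects the message into $OUT_{p_0}(q)$, and $R10$, which \emph{deletes} it when $OUT_{p_0}(q)$ is neither empty nor equal to $IN_{q}(p_0)$. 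Lemma~\ref{INTFree}, which is all you invoke, only says that the extra buffer is eventually freed --- and its own proof allows this to happen precisely through the deleting rules $R10$ or $R12$. So your claim that ``one of these rules fires and places $m$ in the output buffer'' does not follow from Lemma~\ref{INTFree} plus weak fairness alone.

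The missing ingredient is Lemma~\ref{VPIF}: for a \emph{valid} PIF wave (one initiated simultaneously with $R8$), the synchronization between the broadcast/feedback phases and the forwarding rules guarantees that when the $C$-Action becomes enabled at the initiator, $OUT_{p_0}(q)$ is free (either $\epsilon$ or equal to $IN_{q}(p_0)$). This is exactly what disables $R10$ and enables $R9$, so the reversed message survives; it is the heart of the whole construction (the reason the PIF escort exists at all), and the paper's proof of the present lemma cites it explicitly. Without it, your argument cannot exclude that the message is destroyed during the route change, and the ``at least one route change'' half of your claim collapses. A secondary, minor point: rule $R6$ is only available at $p_0$ (its guard $Road\mbox{-}Change_{p}(m)$ requires $p = p_0$); at the other extremity the reversal is performed by $R11$, which your proof never mentions.
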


\begin{proof}
Let $m$ be a message. 
According to Lemma \ref{DeadStable}, $m$ progresses in the system (no deadlock happens and no message stays in the same buffer indefinitely). 
So, if $m$ is not in a buffer that is not suitable for it, then $m$ progresses in the system according to the buffer
graph.
Two cases are possible:
\begin{itemize}
\item{$m$ will be in a finite time in the Input buffer of the processor $p_{0}$. Since the message is in a buffer that is not suitable for it, $p_{0}$ is not the destination of $m$. However we are sure that $EXT_{p}$ will be free in a finite time (refer to Lemma \ref{INTFree}) and that the B-Action will be enabled on $p_{0}$ in a finite time too (The PIF is a snap stabilizing algorithm). Hence the message in $IN_{p_{0}}(q)$ will be copied in $EXT_{p_{0}}$ and a PIF wave is initiated at the same time. In another hand, it has been shown in Lemma \ref{VPIF} that in the case of a valid PIF wave (Note that this is our case)  when the PIF ends ($S_{q}=(F,p_{0})$) and C-Initiator is enabled on $p_{0}$ (we are sure that this will happen since the PIF is snap stabilizing) and $OUT_{p}(q)$ becomes free. Hence $R9$ is enabled on $p_{0}$ and the message that is in the extra buffer can be put in the output buffer of $p_{0}$ and deleted from the extra buffer. Note that since the network is a chain and $p_{0}$ is at the extremity of this chain, we are sure that the message will meet its destination since it can visit all the processors. Hence no other changes route are done.}
\item{The message will reach the input buffer of the processor $p'$ that is at the other extremity of the chain ($p'\ne 0$). Since the messages progress in the system (see Lemma \ref{DeadStable}) $OUT_{p'}(q')$ will be free in a finite time. Hence when a message that is not intended to $q$ is in $IN_{q}(q')$ and since the daemon is weakly fair, we are sure that $q$ will execute $R11$ in a finite time. Thus the message will be copied in $OUT_{q}(q')$ and deleted from $IN_{q}(q')$. Now since $q$ is at the extremity of the chain, the message will meet its destination hence no other route change is performed and the lemma holds.  }
\end{itemize}
\end{proof}

 Once the routing tables are stabilized, every new message is generated in a suitable buffer.  So, it is clear from Lemma~\ref{changeR} that the number of messages that are not
in a suitable buffer strictly decreases.  The next lemma follows:

\begin{lem}\label{SWBUF}
When the routing tables are stabilized and after a finite time, all the messages are in buffers that are suitable for them.
\end{lem}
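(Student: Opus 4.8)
The plan is to establish this liveness property by a monotonicity (counting) argument on the number of messages that do not occupy a suitable buffer, feeding on the observation stated just before the lemma together with Lemma~\ref{changeR}. Let $t_0$ denote a time after which the routing tables are stabilized, and let $U$ be the set of messages present in the system at time $t_0$ that sit in a buffer which is \emph{not} suitable for them. Since the chain has $n$ processors and each holds only a constant number of buffers, each buffer stores at most one message, so $U$ is finite.

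First I would show that no message created after $t_0$ is ever unsuitable. Once the routing tables are stabilized, $Next_p(d)$ returns the neighbour that actually lies on the path toward $d$; hence any message generated through $R1$ is placed in $OUT_p(Next_p(d))$, a buffer on the correct direction, \ie a suitable buffer. Consequently the only messages that can possibly be unsuitable after $t_0$ are those of $U$.

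Next I would show that each $m \in U$ becomes suitable in finite time and then stays suitable. By Lemma~\ref{changeR}, every message not in a suitable buffer undergoes exactly one route change; the proof of that lemma exhibits this change occurring in finite time, using the progression guaranteed by Lemma~\ref{DeadStable}, the freeing of $EXT_{p_0}$ from Lemma~\ref{INTFree}, and the validity property of Lemma~\ref{VPIF}. After this single route change, $m$ travels in the direction of its destination; since the routing tables are now correct and the topology is a chain, $m$ reaches $d$ before any extremity beyond $d$ and is consumed (Lemma~\ref{DeadStable} again), so $m$ never requires a second route change and remains suitable until delivery. Hence each $m \in U$ is suitable from some finite time $t_m$ onward, which is exactly the statement that the number of unsuitable messages strictly decreases.

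Finally, I would conclude by finiteness: because $U$ is finite, at time $\max_{m \in U} t_m$ (itself finite) every message of $U$ has become suitable, and by the first step no later message is ever unsuitable. Therefore, after a finite time, all messages in the system are in buffers suitable for them, which is the claim. The delicate point is the third step, namely ensuring both that the single route change of Lemma~\ref{changeR} actually completes in finite time and that a suitable message cannot later revert to an unsuitable buffer; both facts are inherited from the proofs of Lemmas~\ref{changeR} and~\ref{DeadStable}, since a correctly-directed message on a chain is consumed at its destination before it can reach any wrong extremity.
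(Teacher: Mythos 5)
Your proof is correct and follows essentially the same route as the paper: both argue that once the routing tables are stabilized every newly generated message is placed in a suitable buffer (so the set of unsuitable messages cannot grow), and then invoke Lemma~\ref{changeR} to conclude that each of the finitely many remaining unsuitable messages becomes suitable after its single route change, in finite time. Your version is simply more explicit than the paper's (introducing the finite set $U$, the time $\max_{m\in U} t_m$, and the observation that a message cannot revert to being unsuitable), but the decomposition and the key lemma used are identical.
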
  

\begin{proof}
When the routing table are stabilized some of the messages still are in buffers that are not suitable for then. However, since the routing tables are stabilized, every message is generated in a suitable buffer, hence the number of 
messages that are in buffers that are not suitable for them does not increase. In another hand, According to Lemma \ref{changeR}, every message that is in the wrong direction will undergo exactly one route change and hence all the wrong messages that are in the system will be in the right direction in a finite time and the lemma holds. 

\end{proof}


 From there, it is important to show that any processor can generate a message in a finite time.  From
Lemma~\ref{SWBUF}, all the messages are in suitable buffers in a finite time.  Since the PIF waves are used for route
changes only, then no PIF wave will be initiated. That what we show in the two following lemmas:

\begin{lem}\label{PIFREQ}
In the case where PIF-Request=true, it will be set at false in a finite time.
\end{lem}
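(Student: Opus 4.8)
The plan is to track the single variable named in the statement, $PIF\mbox{-}Request$, and to exploit the fact that it is meaningful and modified only at the initiator $p_0$. First I would record which actions touch it: it is set to \emph{true} only by $R7$, and it is reset to \emph{false} only by the $B\mbox{-}Action$ of the initiator (when \textit{init-PIF} holds), by $R13$, and by $R14$. No other action of either protocol writes it, so once $PIF\mbox{-}Request_{p_0}=true$ it stays true until one of these three resetting actions fires; in particular $PIF\mbox{-}Request_{p_0}$ can change only in a step in which $p_0$ itself is activated. It therefore suffices to show that, while $PIF\mbox{-}Request_{p_0}=true$, one of $\{B\mbox{-}Action,R13,R14\}$ becomes and stays enabled at $p_0$, so that weak fairness forces its execution.

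I would then case split on the PIF state $S_{p_0}$. By Lemma~\ref{PIF} the PIF is snap-stabilizing, so after a finite time the initiator is in one of its legitimate states $(B,-1)$ or $(C,NULL)$, any corrupted value of $S_{p_0}$ being cleaned first. If $S_{p_0}=(B,?)$, then $R13$ is enabled; since $S_{p_0}$ and $PIF\mbox{-}Request_{p_0}$ can be modified only by $p_0$, the guard of $R13$ remains true until $p_0$ acts, and weak fairness then executes it, setting $PIF\mbox{-}Request_{p_0}=false$. If $S_{p_0}=(C,NULL)$ and the input buffer of $p_0$ is empty or carries a message destined to $p_0$, the same guard-stability argument applies to $R14$. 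And if $S_{p_0}=(C,NULL)$ with all neighbours in $(C,NULL)$, then \textit{init-PIF} holds and the $B\mbox{-}Action$ fires, again resetting the variable.

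The remaining and hardest case is $S_{p_0}=(C,NULL)$ with a message destined elsewhere in the input buffer and with some neighbour not in $(C,NULL)$, so that neither $R14$ nor the $B\mbox{-}Action$ is immediately enabled. Here I would argue that, as long as $PIF\mbox{-}Request_{p_0}$ stays true, $p_0$ starts no new wave, and since $p_0$ is the only initiator, every wave present in the system (including any garbage wave left by the initial configuration) must terminate by Lemma~\ref{PIF}. I would also check that $S_{p_0}$ cannot drift away from $(C,NULL)$ during this wait, since none of $p_0$'s enabled actions modifies $S_{p_0}$ in this situation. Once the last wave terminates the system reaches the quiescent configuration in which $p_0$ and all its neighbours are in $(C,NULL)$; from an all-clean configuration only $p_0$ can move the PIF, so this configuration persists and \textit{init-PIF} then holds continuously, whence weak fairness executes the $B\mbox{-}Action$ and resets $PIF\mbox{-}Request_{p_0}$. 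The main obstacle is exactly establishing this eventual return to the all-$(C,NULL)$ configuration together with its persistence; every other case reduces to a guard-stability plus weak-fairness argument.
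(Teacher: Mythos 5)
Your proof is correct and follows essentially the same route as the paper's: a case analysis on $S_{p_0}$ in which $R13$ (state $(B,?)$), $R14$ (state $(C,NULL)$ with an empty input buffer or a message for $p_0$), or the $B\mbox{-}Action$ of the initiator resets $PIF\mbox{-}Request$, each closed by a guard-stability plus weak-fairness argument. The only difference is that you treat the hardest case --- a foreign message in $IN_{p_0}$ with a neighbour not yet in $(C,NULL)$ --- explicitly via termination of pending waves (Lemma~\ref{PIF}) before \textit{init-PIF} can hold, whereas the paper compresses this into the single claim that $PIF\mbox{-}Request$ is reset ``once the PIF-wave is initiated''; your version fills in that step rather than departing from it.
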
  

\begin{proof}
Note that in the case where PIF-request is true and the B-Action of the initiator is enabled on $p_{0}$, PIF-Request will be set at false when the B-Action is executed (see B-Action of the initiator). Otherwise two cases are possible according to the state of the initiator:
\begin{itemize}
\item{$S_{p_{0}}=(B,?)$. In this case PIF-Request is set at false by the forwarding algorithm by executing $R13$ ($R13$ is enabled on $p_{0}$ and the daemon is weakly fair).
\item{$S_{p_{0}}=(C,NULL)$}. If $IN_{p_{0}}$ contains a messages and the destination of he message is not $p_{0}$ then PIF-Request will be set at false by the PIF algorithm once the PIF-wave is initiated. However in the case where the input buffer of $p_{0}$ is empty or contains a message to $p_{0}$ then $R14$ is enabled and since the daemon is weakly fair $R14$ will be executed on $p_{0}$ and hence PIF-Request is set at false.}
\end{itemize}
From the cases above we can deduct that if PIF-Request is true then it will be set at false in a finite time and the lemma holds.
\end{proof}

\begin{lem}\label{NOPIF}
When the routing tables are stabilized and all the messages are in suitable buffer, no PIF wave is initiated.
\end{lem}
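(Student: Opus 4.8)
The plan is to prove the contrapositive-flavoured statement: under the hypotheses, the predicate $Road\mbox{-}Change_{p_0}$ can never become true, and to then propagate this fact through the shared variable $PIF\mbox{-}Request$ up to the guard $init\mbox{-}PIF$ of the initiator's B-Action, which is the \emph{only} way a fresh PIF wave can start.

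First I would argue that the hypothesis is an invariant, not merely a snapshot. Once the routing tables are stabilized, every newly generated message is placed in a suitable buffer (as already observed just before Lemma~\ref{SWBUF}), and a message already in a suitable buffer stays suitable as it progresses, since with correct routing tables every internal transmission ($R3$) moves it one buffer closer to its destination and consumption ($R2$) removes it exactly at its destination. Hence, from the configuration guaranteed by Lemma~\ref{SWBUF} onward, all messages remain in suitable buffers forever. The key geometric observation is that $p_0$ sits at an extremity of the chain, so $IN_{p_0}(q)$ is a dead end: a message sitting there in a suitable buffer can only have $p_0$ as its destination, because a message bound for some $d\neq p_0$ would have to leave $p_0$ towards its unique neighbour $q$, i.e. it would be suitable in $OUT_{p_0}(q)$ and not in $IN_{p_0}(q)$. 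Consequently the conjunct $IN_{p}(q)=(m,d,c)\wedge d\neq p$ of $Road\mbox{-}Change_{p_0}(m)$ is never satisfiable, so $Road\mbox{-}Change_{p_0}$ never holds.

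Next I would track $PIF\mbox{-}Request$. The only rule of the forwarding algorithm that sets $PIF\mbox{-}Request_{p_0}:=true$ is $R7$, whose guard contains $Road\mbox{-}Change_{p_0}(m)$; since that predicate never holds, $R7$ is never enabled and the forwarding algorithm never raises $PIF\mbox{-}Request$ again. If $PIF\mbox{-}Request_{p_0}$ happens to be $true$ in the starting configuration (a residue of the arbitrary initial state), Lemma~\ref{PIFREQ} guarantees it is reset to $false$ in finite time; combined with the fact that it can never be set back to $true$, we obtain that after a finite time $PIF\mbox{-}Request_{p_0}$ equals $false$ permanently. Since $init\mbox{-}PIF$ requires $PIF\mbox{-}Request_{p_0}=true$, the initiator's B-Action is thereafter never enabled, so no PIF wave is initiated; any wave still running from the corrupted initial configuration terminates in finite time by Lemma~\ref{PIF}.

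I expect the delicate point to be the first paragraph: pinning down that a suitable message at the dead-end $IN_{p_0}(q)$ forces $d=p_0$, and that the suitable-buffer property is genuinely preserved so that $Road\mbox{-}Change_{p_0}$ stays false \emph{forever} rather than just at a single instant. The $PIF\mbox{-}Request$ bookkeeping of the last paragraph is then routine, being an immediate consequence of Lemmas~\ref{PIFREQ} and~\ref{PIF} once the absence of road changes is established.
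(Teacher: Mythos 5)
Your proposal is correct and follows essentially the same route as the paper's own proof: invoke Lemma~\ref{PIFREQ} to clear any residual $PIF\mbox{-}Request$, observe that $R7$ is the only rule that can set it back to $true$, and argue that $R7$ is never enabled once the routing tables are stabilized and all messages sit in suitable buffers. The only difference is that you make explicit what the paper leaves implicit, namely \emph{why} $R7$'s guard fails (the dead-end argument that a suitable message in $IN_{p_0}(q)$ must have destination $p_0$, so $Road\mbox{-}Change_{p_0}$ can never hold) and that the suitable-buffer hypothesis persists over time, which are worthwhile refinements but not a different approach.
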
  

\begin{proof}
According to Lemma \ref{PIFREQ}. PIF-Request will be set at false in a finite time. Note that the only rule that set PIF-Request at true is $R7$. However $R7$ is never enabled since all the messages on the chain are in suitable buffer and since the routing tables are correct (all messages are generated in suitable buffer). Thus the lemma holds.
\end{proof}

 From this point, the fair pointer mechanism cannot be disrupted by the PIF waves anymore. So, the fairness of message
generation guarantees the following lemma:
\begin{lem}\label{GMSG}
Any message can be generated in a finite time under a weakly fair daemon.
\end{lem}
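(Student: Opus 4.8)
The plan is to show that, after the system has settled, every pending generation request at a processor is serviced by rule $R1$ within finite time. First I would invoke Lemmas~\ref{SWBUF} and~\ref{NOPIF}: after a finite prefix of the execution the routing tables are stabilized and all messages reside in suitable buffers, so no PIF wave is ever initiated again. Consequently the predicate $NO\mbox{-}PIF_{p}$ holds permanently at every processor $p$, which removes the PIF-gating conjunct from the guard of $R1$ (and of $R3$). From this point on, the only remaining obstacles to generation are the availability of the output buffer and the arbitration performed by the fair pointer.

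Next I would fix a processor $p$ with $Request_{p}=true$ and let $q=Next_{p}(d)$ be the neighbour designated by the now-stable routing table for the pending message. Since $Request_{p}$ can be reset only by $R1$ itself, the request persists until it is served; hence it suffices to show that $R1$ eventually fires at $p$. The two conjuncts of the guard that remain to be established are that $OUT_{p}(q)$ is free (\ie $OUT_{p}(q)=\epsilon$ or $OUT_{p}(q)=IN_{q}(p)$) and that the fair pointer selects generation. By Lemma~\ref{DeadStable} every message keeps progressing, so the content of $OUT_{p}(q)$ is forwarded and erased (through $R5$ or $R5'$) infinitely often; therefore $OUT_{p}(q)$ is free infinitely often. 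Moreover, after stabilization the only rules that write into $OUT_{p}(q)$ are $R1$ and $R3$, both owned by $p$ and both gated by the pointer (the road-change rules $R6$--$R10$ are disabled once all messages are suitable). Hence, once $OUT_{p}(q)$ becomes free, it remains free until $p$ is activated.

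Then I would use the fairness of the pointer: the arbitration between generation ($R1$) and internal transmission ($R3$) cannot defer generation forever, and since the pointer advances only when $p$ is scheduled, after finitely many activations of $p$ the pointer points to generation and stays there while no output action is taken. Combining this with the previous paragraph, there is a configuration in which $OUT_{p}(q)$ is free, $Request_{p}=true$, the pointer is on generation, and $NO\mbox{-}PIF_{p}$ holds, \ie $R1$ is enabled at $p$, and every conjunct of this guard is stable until $p$ acts. By weak fairness the daemon eventually selects $p$, which executes $R1$, generates the message, and resets $Request_{p}$. This establishes the lemma.

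The main obstacle is the synchronisation between the two independent fairness assumptions: the fair pointer, which advances only when $p$ is scheduled, and the weakly fair daemon, which schedules $p$ only because some action is continuously enabled. The delicate point is to argue that the conjunction ``$OUT_{p}(q)$ free'' $\wedge$ ``pointer on generation'' is simultaneously achievable and, once achieved, remains stable long enough for the daemon to fire $R1$. This hinges on the fact that after stabilization no other rule can fill $OUT_{p}(q)$ or flip the pointer behind $p$'s back, which is precisely what Lemma~\ref{NOPIF} guarantees by ruling out interfering PIF waves.
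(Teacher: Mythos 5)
Your proof is correct and follows essentially the same route as the paper's: both rely on Lemma~\ref{NOPIF} (via Lemma~\ref{SWBUF}) to eliminate PIF interference, on the continual progression of messages (Lemma~\ref{DeadStable}) to guarantee that $OUT_{p}(q)$ is freed, on the fairness of the pointer to prevent $R3$ from starving $R1$, and on the weakly fair daemon to conclude that $R1$ fires. Your version is in fact somewhat more careful than the paper's two-case analysis, since you explicitly argue the stability of the guard conjuncts (freeness of $OUT_{p}(q)$ and the pointer position) that weak fairness requires.
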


\begin{proof}
According to Lemma \ref{NOPIF}, when the routing tables are stabilized and when all the messages are containing in
buffers that are suitable for them no PIF and no Road-change are executed. In another hand since the routing tables
are stabilized and since the buffer graph of the chain consists on two disjoint chains  (it is a DAG) then no deadlock
happens and all the messages progress in the system. Now suppose that the processor $p$ wants to generate a message. Recall that the generation of a message $m$ for the destination $d$ is always done in the output buffer of the processor $p$ connected to the link $(p,q)$ such as $Next_{p}(d)=q$. Two cases are possible
\begin{enumerate}
\item\label{free}{$OUT_{p}(q)=\epsilon$}. In this case, the processor executes either $R1$ or $R3$ in a finite time. the result of this execution depends on the value of the pointer. Two cases are possible:
\begin{itemize}
\item{the pointer refers to $R1$.} Then $p$ executes $R1$ in a finite time and we obtain the result.
\item{the pointer refers to $R3$.} Then $p$ executes $R3$ in a finite time. Hence $OUT_{p}(q)\ne\epsilon$ and we retrieve case ~\ref{Nfree}. Note that the fairness of the pointer guarantees us that this case cannot appear infinitely.
\end{itemize}
\item\label{Nfree}{$OUT_{p}(q)\ne\epsilon$.} Since all the messages move gradually in the buffer graph we are sure that $OUT_{p}(q)$ will be free in a finite time and we retrieve case  ~\ref{free}. \\
\end{enumerate} 
We can deduct that every processor can generate a message in a finite time.
\end{proof}


Due to the color management (Function~$Choice(c)$), the next lemma follows:

\begin{lem}\label{lem:duplicate}
The forwarding protocol never duplicates a valid message even if $A$ runs simultaneously.
\end{lem}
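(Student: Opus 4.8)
The plan is to show that any valid message, identified by its full triple $(m,d,c)$, can trigger the consumption rule $R2$ at most once during the entire execution. First I would recall that a message is delivered \emph{only} through $R2$, and only at its destination: the predicate $Consumption_{p}(q,m)$ requires simultaneously $IN_{p}(q)=(m,d,c)$, $d=p$, and $OUT_{q}(p)\ne(m,d,c)$, while the statement sets $IN_{p}(q):=OUT_{q}(p)$. Hence immediately after a delivery the buffer $IN_{p}(q)$ no longer contains $(m,d,c)$, since by the guard $OUT_{q}(p)\ne(m,d,c)$. The whole burden of the proof is therefore to establish that the triple $(m,d,c)$ can never again occupy $IN_{d}(q)$ while $OUT_{q}(d)\ne(m,d,c)$ once it has been consumed.

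The key tool is the color supplied by $Choice(c)$. I would formalize its informal description as an invariant: whenever a message is written into a buffer $b$ (by $R1$, $R3$, $R6$, $R9$, $R11$, or a combined copy), its color is chosen to differ from the colors of the messages stored in the buffers adjacent to $b$ in the buffer graph. From this I would derive that two \emph{distinct} logical messages simultaneously present in adjacent buffers always carry distinct triples, so that the triple $(m,d,c)$ unambiguously tags a single logical message as it is copied from buffer to buffer. Combining this with Lemma~\ref{DeadStable} (every message progresses and none remains in a buffer forever) and Lemma~\ref{changeR} (a misdirected message undergoes exactly one route change and thereafter moves monotonically toward $d$), I would conclude that a given instance reaches $IN_{d}(q)$ exactly once, and that its predecessor copy in $OUT_{q}(d)$ is erased by $R5$/$R5'$ on $q$ (enabled precisely when $OUT_{q}(d)=IN_{d}(q)$), so that the guard $OUT_{q}(d)\ne(m,d,c)$ turns true at exactly one instant, yielding exactly one consumption.

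The only delicate point is the transient handoff interval during which the message occupies the two adjacent buffers $OUT_{q}(d)$ and $IN_{d}(q)$ at once. Here I must argue that $R2$ cannot fire while $OUT_{q}(d)$ still equals $(m,d,c)$, which is immediate from its guard, so consumption is deferred to the unique moment after the predecessor copy is cleaned. After consumption the instance has been removed from the system, so no surviving copy remains to be re-delivered; and since $Choice$ gives any message subsequently written into $OUT_{q}(d)$ a color distinct from that currently in $IN_{d}(q)$, no spurious re-creation of the exact tag $(m,d,c)$ in a position re-triggering $R2$ can occur. Finally, for the clause ``even if $A$ runs simultaneously'', I would observe that the routing algorithm influences only the neighbor returned by $Next_{p}(d)$, hence only the \emph{direction} of forwarding; it never writes a buffer nor alters a color, so the color-based uniqueness argument, and therefore the at-most-once conclusion, is entirely insensitive to whether the routing tables are stabilized.

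The step I expect to be the main obstacle is making rigorous the claim that the triple $(m,d,c)$ tracks a single logical message: this needs a careful invariant on the colors of adjacent buffers, preserved by every buffer-writing rule, showing in particular that the \emph{only} way $(m,d,c)$ can reside in two buffers at once is the legitimate transient handoff, and never two independently propagating copies of the same instance.
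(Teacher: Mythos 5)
Your proposal is correct and rests on exactly the two ingredients the paper's own proof uses: the guards of $R2$/$R3$/$R4$ (which keep the downstream copy in $IN_{p}(q)$ inert during the transient handoff, until the upstream copy in $OUT_{q}(p)$ is erased or overwritten), and the color function $Choice(c)$ (which guarantees that equality of triples in adjacent buffers means ``same logical message in handoff'' rather than two distinct messages). The paper merely organizes the same argument as a case analysis on the location of the message ($IN$, $OUT$, or $EXT$ buffer) and a rule-by-rule check that every move is a sequential copy-then-delete, whereas you center it on the single delivery rule $R2$; this is a presentational difference, not a different proof.
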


\begin{proof}
Three cases are possible:
\begin{itemize}
\item{$m$ is in $IN_{p}(q)$. According to the rules that are enabled on $p$, three cases are possible \\ } 
\begin{itemize}
\item{the message is consumed ($R2$ is executed ) hence the message $m$ is deleted from $IN_{p}(q)$ since a new value overwrites since $IN_{p}(q)=OUT{q}(p)$ (Note that this happen only when $OUT_{q}(p)\ne IN_{p}(q))$.}
\item{$R8$ is executed . The message is copied in $EXT_{p}$ (for the processor $p_0$) and deleted from $IN_{p}(q)$ since a new value overwrites ($IN_{p}(q)=OUT{q}(p)$) in a sequential manner.}
\item{$R4$ is executed. The message is put in this case in $OUT_{p}(q')$ and deleted from $IN_{p}(q)$ in a sequential manner hence only one copy is kept ($q'$ $\in$ $N_{p}/\{q\}$). Note that these two rules are not enabled only if $OUT_{q}(p)$ does not contain the same message.}
\end{itemize}
\item{$m$ is in $OUT_{p}(q)$. In this case the message $m$ is copied in the input buffer of the processor $q$ ($IN_{q}(p)$). Hence two copies are in the system. However the message in $IN_{q}(p)$ is not consumed and not transmitted unless the copy in $OUT_{p}(q)$ is deleted (see the rules $R2$ and $R4$). } 
\item{$m$ is in $EXT_{p}$. In this case the message is either deleted or put in $OUT_{p}(q)$. Since this operation is a local operation (the copy is done between two buffer of the same processor) then the message is copied in the new buffer and deleted from the previous one in a sequential manner.}
\end{itemize}

From the cases above we can deduct that no message is duplicated in the system. Hence $m$ is delivered at most once to its destination.
\end{proof}


 From Lemma~\ref{GMSG}, any message can be generated in a finite time.  
 From the PIF mechanism and its synchronization with the forwarding protocol the only message that can be deleted is
the message that was in the extra buffer at the initial configuration.  Thus:


\begin{lem}\label{KeepMSG}
Every valid message (that is generated by a processor) is never deleted unless it is delivered to its destination even
if $Rtables$  runs simultaneously.
\end{lem}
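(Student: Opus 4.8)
The plan is to show that deletion happens in exactly one place in the algorithm, the extra buffer $EXT_{p_0}$, and then to argue that any \emph{valid} message that ever sits there is rescued into $OUT_{p_0}(q)$ rather than erased. The concurrent execution of $Rtables$ is harmless throughout, since neither the location of the deletions nor the synchronization established in Lemma~\ref{VPIF} depends on the correctness of the routing tables.

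First I would survey the rules of Algorithm~\ref{algo:MF} to locate every deletion. Inspecting the statements, each rule that copies a message into an output or extra buffer ($R1,R3,R6,R8,R9,R11$) only writes into a target that its guard certifies to be \emph{free} (empty, or already equal to the downstream copy $IN_q(p)$), so nothing valid is overwritten there; each rule that overwrites an input buffer either delivers the message first ($R2$), or has already copied it into an output/extra buffer within the same atomic statement ($R3,R6,R8,R11$), or writes into a buffer its guard certifies empty ($R4,R5,R5'$); and the erasures $OUT_p(q):=\epsilon$ in $R5,R5'$ clear only a buffer whose content has already reached $IN_q(p)$. Hence the sole statements that clear a buffer without first securing a copy of its content are $R10$ and $R12$, and both erase $EXT_{p_0}$. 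Every deletion is therefore an erasure of the extra buffer of $p_0$.

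Next I would track how a valid message can reach $EXT_{p_0}$ and show it can trigger neither $R10$ nor $R12$. A message enters the extra buffer only through $R8$, whose guard contains $B\mbox{-}initiator$; by the guard init-PIF this forces $S_{p_0}=(C,NULL)$ with all neighbours clean, so $R8$ fires in the same step as the B-Action, launching a \emph{valid} PIF wave (the hypothesis of Lemma~\ref{VPIF}) and leaving $S_{p_0}=(B,-1)$. Throughout the broadcast and feedback phases the initiator keeps $S_{p_0}=(B,-1)$ until its C-Action, so $R12$ (which requires $S_{p_0}\ne(B,-1)$) stays disabled while the valid message rests in $EXT_{p_0}$. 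When the wave completes, $C\mbox{-}Initiator$ becomes enabled while $S_{p_0}$ is \emph{still} $(B,-1)$; by Lemma~\ref{VPIF} we have $OUT_{p_0}(q)=\epsilon$ or $OUT_{p_0}(q)=IN_q(p_0)$ at that instant, so $R9$, not $R10$, is the enabled extra-buffer action. Since the composition semantics make an enabled processor fire the enabled actions of every layer together, in this single step the C-Action sets $S_{p_0}:=(C,NULL)$ while $R9$ simultaneously copies the message into $OUT_{p_0}(q)$ and clears $EXT_{p_0}$. Thus the valid message is forwarded, never erased.

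Finally I would combine the two steps: $R10$ and $R12$ can erase only a message that occupies $EXT_{p_0}$ \emph{without} being escorted by a valid PIF wave, and the previous paragraph shows that a valid message is always so escorted. The only remaining candidate is a message already present in $EXT_{p_0}$ in the arbitrary initial configuration, which was never produced by any processor and is therefore not valid; erasing it is exactly what breaks the initial deadlock. I expect the delicate point to be the atomicity argument at wave completion: one must invoke the composition semantics together with the fact that $C\mbox{-}Initiator$ is evaluated while $S_{p_0}=(B,-1)$, so that $R9$ and $R12$ are never simultaneously enabled on the valid message and the content of $EXT_{p_0}$ is copied out \emph{before} the state leaves $(B,-1)$.
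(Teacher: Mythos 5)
Your proposal is correct and follows essentially the same route as the paper's proof: a rule-by-rule check that every forwarding rule copies before it overwrites (or only overwrites buffers whose content survives downstream), reducing all genuine deletions to $R10$ and $R12$ on $EXT_{p_0}$, and then the key argument that a valid message enters $EXT_{p_0}$ only via $R8$ fired simultaneously with the B-Action, so that Lemma~\ref{VPIF} guarantees $R9$ (not $R10$) is the rule enabled at wave completion while $S_{p_0}=(B,-1)$ keeps $R12$ disabled. The paper phrases this as a proof by contradiction, but the decomposition, the invoked lemma, and the synchronization facts are the same as yours.
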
  

\begin{proof}
The proof is by contradiction, suppose that there is a message $m$ that is deleted without being delivered to its destination. 

By construction of $R3$, this cannot be a result of an internal forwarding since the message $m$ is first of all copied in the Output-buffer $OUT_{p}(q)$ and then erased from the Input-buffer $IN_{p}(q')$ since a new value overwrites it. Note that these two rules are enabled only if $OUT_{p}(q)$=$IN_{q}(p)$ or $OUT_{p}(q)=\epsilon$. Hence when the message $m$ is copied in the $OUT_{p}(q)$ no message is deleted (one copy remains in $IN_{q}(p)$ in the case where $IN_{q}(p)=OUT_{p}(q)$). 

By the construction of Rule $R4$,  the message is only copied in the Input-Buffer and not deleted from the Output-buffer at the neighbour processor simultaneously (the only rules that delete a message from the Output-buffer are $R5$ and $R3$ and the guards of these rules are not verified when $R4$ is enabled).

If $R5$ is enabled in processes $p$, that means that $OUT_{q}(p)=IN_{p}(q)$ and $IN_{p}(q')=\epsilon$, $q'\in N_{p}/\{q\}$. When one of these two rules are enabled, $OUT_{p}(q)=\epsilon$. However according to the color management (Function Choice(c)),  we are sure that a copy of the message that was in $OUT_{p}(q)$ is in $IN_{q}(p)$. 

By the construction of the rules $R6$ and $R11$, this cannot be the result of the execution of these two rules because the message that is in $IN_{p}(q)$ such as $p_0$ and $p$ is not the destination, is copied in the Output buffer and deleted from the Input buffer sequentially and then $p$ copies the message that is in $OUT_{q}(p)$ in $IN_{p}(q)$ , so no message is deleted.

Concerning $R12$, $EXT_{p}$ such as $p_0$ contains the message $m$ and $S_{p}=(C,?)$, which means that no PIF is executed. However, for $p_0$, a message in $IN_{p}(q)$ is copied in $EXT_{p}$  (in the case where $p$ is not the destination) only if $R8$ is enabled, however, when $R8$ is enabled $B\mbox{-}intiator$ is enabled as well. Since in this case the two rules are executed at the same time, hence $S_{p}=(B,?)$. Now, for the processor $p$, Since the PIF is a valid PIF, when the $C\mbox{-}Action$ of $p$ is enabled at the same time as the rules $R10$ or $R9$. If $R10$ is executed then $EXT_{p}=\epsilon$ and $S_{p}=(C,NULL)$ (since $OUT_{p}(q)=\epsilon$ or $OUT_{p}(q)=IN_{q}(p)$), which is a contradiction, since in our case $EXT_{p}\ne \epsilon$. If $R9$ is executed, then the message in the extra buffer of $p$ ($EXT_{p}$) is copied in $OUT_{p}(q)$, $EXT_{p}$ becomes free and $S_{p}=(C,?)$, which is a contradiction with our case. Hence we are sure that the message that is in the extra buffer of $p$ is a message that was not generated by a processor. Hence when $R12$ is executed, this message is deleted (no valid message is deleted).

By the construction of the two rules $R8$ and $R9$, No valid message is deleted by the execution of the two rules, since the message is copied in the extra buffer ($R8$) or in the Output buffer ($R9$) and then it is deleted from the Input buffer ($R8$) or the extra buffer ($R9$).

Concerning the rule $R10$, according to Lemma \ref{VPIF}, when the message that is in $EXT_{p}$ is valid, when the $C\mbox{-}Action$ of the initiator is enabled either $OUT_{p}(q)=\epsilon$ or $OUT_{p}(q)=IN_{q}(p)$. However no such buffers exist. Hence the message in the extra buffer of $p$ is not a valid message (it is not generated by a processor). Hence it can be deleted.

We can deduct from all the cases above that no message that is generated by a processor is deleted, hence the lemma holds.  

\end{proof}


\begin{theorem}
The proposed algorithm (Algorithms~\ref{algo:PIF} and~\ref{algo:MF}) is a snap-stabilizing message forwarding
algorithm (satisfying SP) under a weakly fair daemon. 
\end{theorem}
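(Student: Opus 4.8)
The plan is to establish the theorem by verifying the two requirements of Specification~\ref{spec:SP} directly from the lemmas already proved, noting that snap-stabilization here means exactly that \emph{every} execution, started from an arbitrary configuration and with a possibly corrupted $Rtables$, satisfies $SP$. The first requirement, that any message can be generated in finite time, is precisely Lemma~\ref{GMSG}, so it suffices to cite it.

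The work concentrates on the second requirement: that every valid message is delivered to its destination once and only once in finite time. I would split ``once and only once in finite time'' into three claims, each discharged by one lemma. For \emph{delivery in finite time}, Lemma~\ref{DeadStable} ensures every message keeps progressing along the buffer graph, so a message already in a suitable buffer advances to its destination; a message in an unsuitable buffer drifts to an extremity and, by Lemma~\ref{changeR}, undergoes exactly one route change, after which it too reaches its destination by progression. For \emph{at most once}, Lemma~\ref{lem:duplicate} shows the coloring in $Choice(c)$ forbids duplicating a valid message, so no message is delivered twice. For \emph{at least once}, Lemma~\ref{KeepMSG} shows no valid message is erased before delivery; together with finite-time delivery this yields exactly one delivery. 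Combining the two requirements, every execution satisfies $SP$, which by definition is the claimed snap-stabilization.

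The step I expect to be the real obstacle is ensuring the finite-time guarantees survive the transient phase in which $Rtables$ is not yet stabilized and PIF waves are still triggered by route changes. Two things must be checked carefully. First, the notion of a \emph{suitable} buffer must be well defined independently of the routing tables: on a chain the direction toward a fixed destination is topologically determined, so a wrongly placed message is always corrected at an extremity and, crucially, the single route change of Lemma~\ref{changeR} is final --- once a message is suitable it is never re-misdirected. Second, the generation guarantee of Lemma~\ref{GMSG} itself rests on Lemmas~\ref{SWBUF} and~\ref{NOPIF}, which show that after $Rtables$ stabilizes all messages become suitable and no further PIF wave is launched, so the fair pointer driving generation is eventually never disrupted. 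The main thing to get right is that these facts compose without circularity, each lemma's hypotheses being supplied by the preceding ones.
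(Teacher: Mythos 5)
Your proposal is correct and follows essentially the same route as the paper: the paper's proof of the theorem is exactly this assembly of Lemma~\ref{GMSG} (generation in finite time), Lemma~\ref{KeepMSG} (no valid message deleted before delivery), and Lemma~\ref{lem:duplicate} (no duplication). If anything, your version is slightly more complete, since you additionally invoke Lemmas~\ref{DeadStable} and~\ref{changeR} to justify delivery \emph{in finite time} --- a point the paper's one-paragraph proof leaves implicit, relying on those lemmas only indirectly through the earlier development.
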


\begin{proof}
From Lemma~\ref{GMSG}, any message can be generated in a finite time.  
From Lemma~\ref{KeepMSG}, every valid message is never deleted unless it is delivered to its destination even if $Rtables$ runs simultaneously.
From Lemma~\ref{lem:duplicate}, no valid message is duplicated. 
Hence, the theorem holds.
\end{proof}


\textbf{Remarque}\\For any processor $p$, Forwarding protocol delivers at most $4n-3$ invalid messages.\\

\begin{proof}
Assume that in the initial configuration all the buffers contain a message, since these messages were not generated by the processors of the system, they are invalid messages. Suppose that the destination of the message $m$ in $IN_{p}(q)$ is the processor $p$ such as $q=0$ and $S_{q}=(B,-1)$. Suppose that the daemon activates $p$ which executes $R2$ and the F-action (it is a leaf) . Hence the message $m$ is consumed and $IN_{p}(q)=OUT_{q}(p)$. Hence $OUT_{q}(p)$ becomes free and the $C\mbox{-}action$ of the initiator is enabled, $q$ will copy then the message from $EXT_{q}$ in $OUT_{q}(p)$ and will execute the C-Action. In another hand, since there is no way to know if the messages are valid or not, they all be treated as if they are valid. Since the forwarding algorithm is snap stabilizing, all the messages that were in the buffer of the chain at the beginning are delivered. Since there is 4n-3 buffers in the system, then 4n-3 invalid messages can be delivered and the lemma holds. \\
\end{proof}


\section{Dynamicity\label{sec:dynamicity}}

In dynamic environments, processors may leave or join the network at any time. To keep our solution snap stabilizing we assume that there are no crashes and if a processor wants to leave the network (disconnect), it releases its buffers (it sends all the messages it has to send and wait for their reception by its neighbours) and accepts no more message before leaving.

In this discussion we assume that the rebuilt network is still a chain. It is fundamental to see that in dynamic
systems the problem of keeping messages for ghost destinations with the hope they will join the network again and the
lack of congestion are contradictory. If there is no bound on the number of leavings and joins this problem do not
admit any solution. The only way is to redefine the problem in the context of dynamicity. For example we can modify
the second point of the specification \textit{(SP)} as follows: A valid message $m$ generated by the processor $p$ to
the destination $q$ is delivered to $q$ in a finite time if $m$, $p$ and $q$ are continuously in the same connected
component during the forwarding of the message $m$. Even if that could appear very strong, this kind of hypothesis is
often implied in practice. However we can remark that this new specification is equivalent to $SP$ in static
environments. Our algorithm can easily be adapted in order to be snap stabilizing for this new specification in
dynamic chains. 

Thus we can now delete some messages as follows: we suppose that every message has an additional boolean field
initially set to false. When a message reaches an extremity which is not its destination we have two cases:
\textit{(i)} The value of the boolean is false, then the processor sets it to true and sends it in the opposite
direction. \textit{(ii)} The value of the boolean is true, then the processor deletes it (in this case, if the message
is valid, it crossed all the processors of the chain without meeting its destination).

Finally, in order to avoid starvation of some processors, the speed of joins and leavings of the processors has to be
slow enough to avoid a sequence of PIF waves that could prevent some processors to generate some messages.

\section{Conclusion\label{sec:conclu}}

In this paper, we presented the first snap-stabilizing message forwarding protocol that uses 
a number of buffers per node being independent of any global parameter.  Our protocol works on a linear chain and 
uses only $4$ buffers per link.  
It tolerates topology changes (provided that the topology remains a linear chain). 
This is a preliminary version to get the same result on more general topologies.  
In particular, by combining a snap-stabilizing message forwarding protocol with any self-stabilizing overlay protocols 
(\eg \cite{BBKLPR10} for DHT or \cite{AspnesS2003,CDPT07,CDPT08} for \emph{tries}), we would get a solution
ensuring users to get right answers by querying the overlay architecture.

\begin{scriptsize}
\bibliographystyle{splncs}
\bibliography{PIF}
\end{scriptsize}
\end{document}